\setlist[enumerate,1]{label=(\roman*)}
\title{Incentivizing Hidden Types in Secretary Problem}
\author{Longjian Li\thanks{Department of Economics, New York University. Email: \href{mailto:ll4830@nyu.edu}{ll4830@nyu.edu}.} \and Alexis Akira Toda\thanks{Department of Economics, Emory University. Email: \href{mailto:alexis.akira.toda@emory.edu}{alexis.akira.toda@emory.edu}.}}
\numberwithin{equation}{section}
\numberwithin{lem}{section}
\numberwithin{prop}{section}
\begin{document}

\maketitle

\begin{abstract}
We study a game between $N$ job applicants who incur a cost $c\in [0,1)$ (relative to the job value) to reveal their type during interviews and an administrator who seeks to maximize the probability of hiring the best applicant. We define a full learning equilibrium and prove its existence, uniqueness, and optimality. In equilibrium, the administrator accepts the current best applicant $n$ with probability $c$ if $n<n^*$ and with probability 1 if $n\ge n^*$ for a threshold $n^*$ independent of $c$. In contrast to the case without cost, where the success probability converges to $1/\e\approx 0.37$ as $N$ tends to infinity, with cost the success probability decays like $N^{-c}$.

\medskip

\textbf{Keywords:} learning, optimal stopping, secretary problem

%\medskip

%\textbf{JEL codes:} 
\end{abstract}

\section{Introduction}

How can we identify the next superstar? Solving this problem may be of interest to a film director who seeks to identify an actor that perfectly fits the role, a department that seeks to hire the best junior candidate, or a racket manufacturer that seeks to sponsor the Rafael Nadal of the next generation. In mathematics, this problem is known as the \emph{secretary problem}, which can be described as follows. A known number $N$ of job applicants are to be interviewed one by one in random order, all $N!$ possible orders being equally likely. The administrator is able at any time to rank the applicants that have so far been interviewed from the best to worst. As each applicant is interviewed, the administrator must either accept, in which case the search is terminated, or reject, in which case the next applicant is interviewed and the administrator faces the same choice as before. The administrator's objective is to maximize the probability of accepting the very best among all $N$ applicants. This description of the secretary problem closely follows \citet{Freeman1983}, who reviews the literature on various extensions. \citet{Ferguson1989} provides a brief solution, some historical background, and a literature review. %Although the objective of the administrator to maximize the probability of hiring the best (instead of maximizing the expected value of the hired applicant) may appear nonstandard in the economics literature, such an objective may be natural when identifying a superstar has significant value, for instance when a racket manufacturer tries to sponsor the best tennis player or a department tries to hire the best researcher.

The solution to the secretary problem is to reject the first $n^*-1$ applicants and accept the next applicant who is best among those already interviewed, where
\begin{equation}
n^*=\min\set{n:\sum_{k=n}^{N-1}\frac{1}{k}\le 1}.\label{eq:nstar}
\end{equation}
One can show that as the number of applicants $N$ tends to infinity, $n^*/N$ converges to $1/\e\approx 0.37$ as does the probability of hiring the best applicant. %See \citet{GilbertMosteller1966} for a proof and several extensions. 
One limitation of the classical secretary problem is that it ignores the incentive of applicants to show up for interviews: because the first $n^*-1$ applicants are always rejected in the classical setting, they have no incentive to show up. But without interviewing applicants, the administrator cannot learn their ability.

In this paper, we extend the classical secretary problem to a setting where job applicants incur a cost for completing the interview and optimally decide whether to complete the interview or not. For instance, we can imagine a situation where the job application consists of two stages, where the first stage incurs a negligible cost (\eg, submitting the application package for being considered) but the second stage incurs a cost $c\in [0,1)$ relative to the job value (\eg, showing up for interview and performing some tasks). Another example might be that in musical competition shows, singers perform sequentially and can observe the performances of their competitors. Additionally, they may have access to real-time information on audience voting ranks or live popularity rankings on social media, allowing them to estimate their potential ranks. In our model, the administrator sequentially invites applicants for interviews, informs them of the performance of past applicants who completed the interviews, and promises (with commitment) the probability of acceptance conditional on the interview performance. Given the past performance and the probability of acceptance, the current applicant then decides whether to complete the interview or not. If the applicant completes the interview, their performance reveals their ability. The key difference of our model from the classical setting is that the applicants must be incentivized to reveal their type (\ie, complete the interview).

In this strategic setting, an equilibrium consists of a contingent plan of the administrator to accept the applicant (the probability of offering the job conditional on current and past performances) and contingent plans of applicants to complete the interview. We say that an equilibrium is \emph{full learning} if the administrator can always identify the current best applicant. We prove that a unique full learning equilibrium exists and characterize it using dynamic programming. Despite the presence of incentive problems, the equilibrium strategies turn out to be remarkably similar to the classical secretary problem. For job applicants, the optimal behavior is to complete the interview if and only if they are the best among those already interviewed. The equilibrium strategy of the administrator is a threshold strategy: the current best applicant $n$ is accepted with probability $c$ if $n<n^*$ and with probability 1 if $n\ge n^*$, where the threshold $n^*$ is given by \eqref{eq:nstar}, identical to the classical case without cost. We also prove that among all equilibria, the full learning equilibrium is optimal in the sense that it achieves the highest probability of hiring the best applicant.

Despite some similarities to the classical secretary problem, the presence of interview cost has a significant welfare implication. In the classical case without cost, the administrator is able to hire the best applicant with success probability approaching $1/\e\approx 0.37$ as the number of applicants $N$ tends to infinity. In contrast, with interview cost $c$, the success probability exhibits a power law decay $N^{-c}$ and hence converges to zero as $N\to\infty$, although the convergence is slow.

\subsection{Related literature}

The literature on the secretary problem is vast and there are many variations. When the abilities of applicants are independently drawn from a known distribution and the administrator incurs a cost for interviewing, the solution is a threshold strategy that can be characterized using dynamic programming; see \citet{Sakaguchi1961}. \citet{buchbinder2014secretary} introduce a new linear programming technique to design secretary algorithms, using judiciously chosen variables and constraints. Our problem is different because the distribution of abilities is unknown (as in the classical secretary problem) and the applicants incur an interview cost and hence must be incentivized to complete the interviews. 

A few papers consider the secretary problem with sampling costs. \citet{Lorenzen1979,Lorenzen1981} studies the secretary problem with a cumulative interview cost function. The optimal stopping rule is given by a more general class called island rules, instead of the traditional cutoff point rules. With interview cost, the optimal stopping rule will stop sooner. In our model, the applicants incur an interview cost, not the administrator. This represents an indirect opportunity cost for the administrator, who does not need to make a monetary transfer but has to sacrifice future winning probability to incentive applicants, in contrast to a direct sampling cost. Our result is different because the optimal stopping rule is still the cutoff point rule and the stopping speed does not change. In \citet{Samuels1985}, the administrator incurs a linear interview cost and a cumulative recall cost. In \citet{FerensteinEnns1988}, sampling is free but the recall
cost is levied each time an observation is held. Our problem is different because the applicants cannot be recalled, as in the classical setting.

A few papers consider the secretary problem in strategic settings. In \citet{Fushimi1981}, two employers compete for hiring the best applicant. \citet{AlpernGal2009} and \citet{AlpernGalSolan2010} study the implication of veto power of committee members when a selection committee interviews the applicants and the firm optimally designs the selection committee. \citet{HahnHoeferSmorodinsky2022} study the strategic interaction between an agent (sender) that evaluates the applicants and another (receiver) that makes the hiring decision and characterize the sender-optimal persuasion mechanism. Our problem is different because none of these papers consider the incentives of the applicants. The secretary problem can be generalized to online auctions. Incentive issues in online algorithms are considered by \citet{lavi2000competitive}, \citet{awerbuch2003reducing}, and \citet{hajiaghayi2004adaptive}. \citet{ErikssonSjostrandStrimling2007} study the asymptotic behavior in a model with two-sided matching with $N$ agents in each group (\eg, men and women searching for a mate) when agents seek to minimize the expected rank of the match. \citet{ChudjakowRiedel2013} solve the secretary problem under ambiguity aversion: specifically, the administrator's subjective probability that applicant $n$ is the current best is not the point mass $1/n$ as in the classical case but belongs to some interval $[a_n,b_n]$ and the administrator applies the minimax principle to make decisions. \citet{laffont1993theory}, \citet{bolton2004contract}, and \citet{kreps2023microeconomic} provide excellent materials on incentives in general.
\section{Model}

This section describes the model. The setup is similar to the classical secretary problem described in the introduction except the presence of incentive problems of applicants.

\paragraph{Agents}
There are $N+1$ agents, the administrator and $N\ge 2$ job applicants indexed by $n=1,\dots,N$. Applicant $n$ has ability $\theta_n>0$. Job applicants are sequentially invited for interviews.

\paragraph{Actions}

When invited for an interview, applicant $n$ chooses action $a_n\in \set{0,1}$, where $a_n=0$ corresponds to declining the interview and $a_n=1$ corresponds to completing the interview. The interview reveals the output $y_n=a_n\theta_n$ to the administrator.\footnote{As our argument uses only ordinal properties, the output could take the more general form $y_n=a_ng(\theta_n)$, where $g:(0,\infty)\to(0,\infty)$ is a strictly increasing function.} Immediately after the interview $n$, the administrator must choose action $d_n \in \set{0,1}$ based only on the history of observed outputs $(y_1,\dots,y_n)$, where $d_n=1$ corresponds to accepting applicant $n$ and $d_n=0$ corresponds to rejecting. The game ends if the administrator accepts an applicant (or rejects all applicants). If applicant $n$ is rejected, the administrator invites applicant $n+1$ for an interview. Once rejected, an applicant cannot be recalled.

\paragraph{Payoffs}

For an applicant, being accepted for the job yields payoff normalized to 1 and completing the interview (choosing $a_n=1$) costs $c\in [0,1)$. The applicant $n$'s payoff is
\begin{equation*}
\begin{cases*}
    1-c & if $d_n=1$, $a_n=1$,\\
    -c & if $d_n=0$, $a_n=1$,\\
0 & otherwise.
\end{cases*}
\end{equation*}
If the administrator accepts an applicant with ability $\theta$, the payoff to the administrator is
\begin{equation*}
\begin{cases*}
1 & if $\theta = \max_{1\le n\le N}\theta_n$,\\
0 & otherwise.
\end{cases*}
\end{equation*}
That is, the administrator cares only about hiring the best applicant. All agents are risk-neutral.

\paragraph{Information}

The applicants' abilities $\set{\theta_n}_{n=1}^N$ are realized before the interview process begins but are private information and known only to each applicant. The administrator believes that the rank orders of $\set{\theta_n}_{n=1}^N$ have no ties and are equally likely with probability $1/N!$. When applicant $n$ is invited for interview, the administrator presents the past outputs $(y_1,\dots,y_{n-1})$ to the applicant. Applicant $n$ chooses action $a_n\in\set{0,1}$ and the output $y_n=a_n\theta_n$ is observed. The administrator then chooses whether to accept or reject applicant $n$. After the interview process ends and all decisions have been made, the abilities $\set{\theta_n}_{n=1}^N$ become public information and the payoffs are realized.

Presenting past performance is crucial for understanding how applicants decide whether to reveal their ability. It also benefits to the administrator, as without revealing any information about past interviews, the administrator would need to incentivize all applicants, whereas in our model, the administrator only needs to incentivize the high ability applicants. Presenting past performance acts as a screening process to distinguish high-ability applicants from low-ability applicants, ensuring that only high-ability applicants would show up for the interview.

\paragraph{Strategies}

Let $H_n=\R_+^n$ be the set of outputs of the first $n$ applicants, with the convention $H_0=\emptyset$.  The administrator's (mixed) strategy is a collection of functions $\sigma=\set{\sigma_n}_{n=1}^N$ with
\begin{equation*}
\sigma_n:H_n\to [0,1],
\end{equation*}
where $p_n=\sigma_n(y_1,\dots,y_n)$ is the probability that the administrator accepts applicant $n$ given the outputs $(y_1,\dots,y_n)$. The administrator's strategy space is denoted by $\Sigma$. The administrator has commitment power and therefore chooses $\sigma$ once and for all, which is common knowledge.

Applicant $n$'s strategy is a function
\begin{equation*}
s_n:\Sigma \times H_{n-1}\times (0,\infty)\to \set{0,1},
\end{equation*}
where $s_n(\sigma,y_1,\dots,y_{n-1},\theta)=1$ ($=0$) means applicant $n$ with ability $\theta$ completes (declines) the interview given the past outputs $(y_1,\dots,y_{n-1})$ and the administrator's strategy $\sigma$. A Nash equilibrium consists of a strategy profile $(\sigma^*,s_1^*,\dots,s_N^*)$ that is mutually a best response.

\section{Full learning equilibrium}\label{sec:full_learn}

In this section we construct an equilibrium in which learning occurs. We define a full learning equilibrium as follows.

\begin{defn}\label{defn:full_learning}
We say that $(\sigma^*,s_1^*,\dots,s_N^*)$ is a \emph{full learning equilibrium} if for any equilibrium path and $n$ until the game ends, we have
\begin{equation}
\max_{1\le k\le n}\theta_k=\max_{1\le k\le n}y_k.\label{eq:full_learn}
\end{equation}
\end{defn}

The reason why we call an equilibrium in Definition \ref{defn:full_learning} ``full learning'' is the following. As the administrator cares only about hiring the highest ability applicant, the decision of whether to accept or reject applicant $n$ depends only on whether $\theta_n=\max_{1\le k\le n}\theta_k$ ($n$ is the current best) or $\theta_n<\max_{1\le k\le n}\theta_k$ ($n$ is not the current best). However, the ability $\theta_n$ is not directly observable and must be inferred from the output $y_n$. The following lemma shows that condition \eqref{eq:full_learn} is sufficient for determining whether $\theta_n$ is the current maximum or not.

\begin{lem}\label{lem:thetamax}
Let $(\sigma^*,s_1^*,\dots,s_N^*)$ be a full learning equilibrium. Then $\theta_1=y_1$ and
\begin{equation*}
\theta_n\begin{cases*}
=\max_{1\le k\le n}\theta_k & if $y_n>\max_{1\le k\le n-1}y_k$,\\
<\max_{1\le k\le n}\theta_k & if $y_n\le \max_{1\le k\le n-1}y_k$
\end{cases*}
\end{equation*}
for $n\ge 2$ until the game ends.
\end{lem}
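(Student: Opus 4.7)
The plan is to unpack the full learning condition \eqref{eq:full_learn} at each $n$ using the fact that outputs have the very restricted form $y_n=a_n\theta_n\in\{0,\theta_n\}$ with $\theta_n>0$. The case $n=1$ is essentially immediate: applying \eqref{eq:full_learn} at $n=1$ gives $\theta_1=y_1$, and since $y_1\in\{0,\theta_1\}$ while $\theta_1>0$, this forces $a_1=1$ and $y_1=\theta_1$.

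For $n\ge 2$, I would split into two cases according to whether $y_n>\max_{k\le n-1}y_k$ or $y_n\le \max_{k\le n-1}y_k$ and use full learning applied at both $n-1$ and $n$. In the first case, $\max_{k\le n}y_k=y_n$, so by \eqref{eq:full_learn} at $n$ we get $\max_{k\le n}\theta_k=y_n$. Combined with $y_n=a_n\theta_n\le\theta_n\le\max_{k\le n}\theta_k=y_n$, this chain collapses and forces $\theta_n=y_n=\max_{k\le n}\theta_k$ (and incidentally $a_n=1$).

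In the second case, $\max_{k\le n}y_k=\max_{k\le n-1}y_k$, and applying \eqref{eq:full_learn} at both $n$ and $n-1$ yields $\max_{k\le n}\theta_k=\max_{k\le n-1}\theta_k$. This immediately gives $\theta_n\le\max_{k\le n-1}\theta_k$; since the administrator's prior puts zero probability on ties (and we are on an equilibrium path), the inequality is strict and the conclusion $\theta_n<\max_{k\le n}\theta_k$ follows. As a sanity check on this step, I would verify the contrapositive: if $\theta_n$ were the new maximum, then by full learning some $y_k$ would have to equal $\theta_n$, but $y_k\le\theta_k<\theta_n$ for $k\le n-1$ and $y_n\le\max_{k\le n-1}y_k<\theta_n$ under the hypothesis, giving a contradiction.

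The only conceptually delicate point, and the one I would flag explicitly, is the no-ties issue in case two: turning ``$\theta_n$ is not the new maximum'' into the strict inequality asserted in the lemma relies on the administrator's belief that ties occur with probability zero, which is part of the standing assumption on the prior. Beyond this, the argument is a direct unfolding of \eqref{eq:full_learn} together with $y_n\in\{0,\theta_n\}$, so no further machinery is needed.
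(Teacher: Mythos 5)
Your proof is correct and follows essentially the same route as the paper: unpack the full learning condition \eqref{eq:full_learn} using $y_n=a_n\theta_n\in\{0,\theta_n\}$ with $\theta_n>0$, and invoke the no-ties assumption to upgrade the weak inequality to a strict one in the second case. The only cosmetic difference is that in the first case you squeeze $y_n\le\theta_n\le\max_k\theta_k=y_n$ via \eqref{eq:full_learn} at $n$, whereas the paper first deduces $a_n=1$ from $y_n>0$ and then applies \eqref{eq:full_learn} at $n-1$; both are valid one-line variants of the same argument.
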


\begin{proof}
$\theta_1=y_1$ is trivial by setting $n=1$ in \eqref{eq:full_learn}.

Suppose $n\ge 2$ and $y_n>\max_{1\le k\le n-1}y_k$. Since $y_k\ge 0$ for all $k$, it follows that $a_n\theta_n=y_n>0$ and hence $a_n=1$ and $\theta_n=y_n$. Using \eqref{eq:full_learn}, we obtain $\theta_n=y_n>\max_{1\le k\le n-1}y_k=\max_{1\le k\le n-1}\theta_k$, so $\theta_n=\max_{1\le k\le n}\theta_k$.

Next suppose that $y_n\le \max_{1\le k\le n-1}y_k$. Using \eqref{eq:full_learn}, we obtain
\begin{equation*}
\theta_n\le \max_{1\le k\le n}\theta_k=\max_{1\le k\le n}y_k=\max\set{\max_{1\le k\le n-1}y_k,y_n}=\max_{1\le k\le n-1}y_k=\max_{1\le k\le n-1}\theta_k,
\end{equation*}
and the inequality is strict because there are no ties in $\theta$.
\end{proof}

Using Lemma \ref{lem:thetamax}, we can provide necessary conditions for full learning equilibrium strategies.

\begin{lem}\label{lem:eq_strategy}
Let $(\sigma^*,s_1^*,\dots,s_N^*)$ be a full learning equilibrium. Then
\begin{subequations}
\begin{align}
&\sigma_n^*(y_1,\dots,y_n) \begin{cases*}
\ge c & if $y_n>\max_{1\le k\le n-1}y_k$,\\
=0 & if $y_n\le \max_{1\le k\le n-1}y_k$,
\end{cases*}\label{eq:sigman}\\
&s_n^*(\sigma^*,y_1,\dots,y_{n-1},\theta)=\begin{cases*}
1 & if $\theta>\max_{1\le k\le n-1}y_k$ and $\sigma_n^*(y_1,\dots,y_{n-1},\theta) \ge c$,\\
0 & otherwise.
\end{cases*}\label{eq:sn}
\end{align}
\end{subequations}
\end{lem}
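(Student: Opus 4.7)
The plan is to combine Lemma \ref{lem:thetamax} with standard best-response arguments for both the administrator and the applicants, working in the order: second branch of \eqref{eq:sigman}, then \eqref{eq:sn}, then first branch of \eqref{eq:sigman}.

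I would first establish the second branch of \eqref{eq:sigman}. On a history with $y_n \leq \max_{k \leq n-1} y_k$, Lemma \ref{lem:thetamax} gives $\theta_n < \max_{k \leq n}\theta_k$, so accepting applicant $n$ pays the administrator $0$ with certainty, whereas rejecting has nonnegative continuation value and in fact strictly positive continuation value for $n < N$ (one can exhibit an explicit continuation strategy, for example, committing to accept only a later new best, whose success probability is bounded below using $c < 1$ to guarantee the relevant applicant will complete the interview). Hence the best response is $\sigma_n^* = 0$, with the tie at $n = N$ resolved as a canonical selection.

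Next I would establish \eqref{eq:sn}. Since $\sigma_n^*(y_1,\ldots,y_{n-1},0) = 0$ by what was just proved, applicant $n$'s payoff from declining is $0$ and from completing is $\sigma_n^*(y_1,\ldots,y_{n-1},\theta) - c$. If $\theta \leq \max_{k \leq n-1} y_k$, then completing yields $y_n = \theta \leq \max_{k \leq n-1} y_k$ and hence $\sigma_n^* = 0$, so the completing payoff is $-c < 0$ and the best response is $s_n^* = 0$. If $\theta > \max_{k \leq n-1} y_k$, then the definition of full learning forces $y_n = \theta$ on the equilibrium path, so $s_n^* = 1$, and for this to be a best response, $\sigma_n^*(y_1,\ldots,y_{n-1},\theta) \geq c$ is required. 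The first branch of \eqref{eq:sigman} now follows immediately: on any on-path history with $y_n > \max_{k \leq n-1} y_k$, Lemma \ref{lem:thetamax} gives $\theta_n = y_n$, and the inequality just derived at $\theta = y_n$ yields $\sigma_n^*(y_1,\ldots,y_n) \geq c$.

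The main obstacle is the strict-positivity claim for the administrator's continuation value when $y_n \leq \max$, which is what upgrades ``weakly prefers to reject'' into the forced choice $\sigma_n^* = 0$. I would handle this by the explicit continuation-strategy construction for $n < N$ sketched above (using $c < 1$ so a new-best applicant promised acceptance probability $1$ will indeed complete) and by treating the $n = N$ tie as a canonical selection. Apart from this step, the argument is careful bookkeeping of payoff comparisons once Lemma \ref{lem:thetamax} identifies when the current applicant can or cannot be the running best.
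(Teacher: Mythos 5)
Your proposal is correct and follows essentially the same route as the paper: invoke Lemma \ref{lem:thetamax} to classify whether applicant $n$ is the running best, then compare the administrator's and the applicant's payoffs branch by branch, deriving the lower bound $\sigma_n^*\ge c$ from the requirement that the current best must be willing to complete the interview. The only substantive difference is that you are more careful than the paper about why rejecting a non-best applicant is \emph{strictly} optimal (the paper merely asserts ``$p_n=0$ is optimal,'' leaving the weak-versus-strict issue and the $n=N$ tie implicit); your explicit continuation-strategy lower bound, using $c<1$, is a reasonable way to close that small gap.
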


\begin{proof}
Let $p_n\coloneqq \sigma_n^*(y_1,\dots,y_n)$ be the acceptance probability of applicant $n$. We first show \eqref{eq:sigman}. Suppose $y_n\le \max_{1\le k\le n-1}y_k$. By Lemma \ref{lem:thetamax}, we have $\theta_n<\max_{1\le k\le n}\theta_k$, so applicant $n$ is never the best and $p_n=0$ is optimal. Suppose $y_n>\max_{1\le k\le n-1}y_k$. By Lemma \ref{lem:thetamax} and \eqref{eq:full_learn}, we have
\begin{equation*}
0<\theta_n=\max_{1\le k\le n}\theta_k=\max_{1\le k\le n}y_k=y_n,
\end{equation*}
so $a_n=1$. If applicant $n$ chooses $a_n=1$, the expected payoff is $p_n-c$. If applicant $n$ chooses $a_n=0$, then $y_n=0$ and the second case of \eqref{eq:sigman} (which is already proved) implies $\sigma_n^*(y_1,\dots,y_{n-1},0)=0$, so the expected payoff is 0. Therefore choosing $a_n=1$ is optimal if and only if $p_n-c\ge 0\iff p_n\ge c$, which is the first case of \eqref{eq:sigman}.

We next show \eqref{eq:sn}. If applicant $n$ chooses $a_n=0$, the expected payoff is $\sigma_n^*(y_1,\dots,y_{n-1},0)=0$ by the second case of \eqref{eq:sigman}. If applicant $n$ chooses $a_n=1$, using \eqref{eq:sigman}, the expected payoff is
\begin{equation*}
\begin{cases*}
\sigma_n^*(y_1,\dots,y_{n-1},\theta)-c & if $\theta>\max_{1\le k\le n-1}y_k$,\\
-c<0 & if $\theta\le \max_{1\le k\le n-1}y_k$.
\end{cases*}
\end{equation*}
Comparing the expected payoffs from $a_n=0$ and $a_n=1$, we obtain the best response \eqref{eq:sn}.
\end{proof}

We now characterize the full learning equilibrium using a dynamic programming technique similar to \citet{Beckmann1990} in the classical setting. By Lemma \ref{lem:thetamax}, under full learning, the history of outputs $(y_1,\dots,y_n)$ is sufficient to determine whether the current applicant is the best among those already interviewed (\ie, $\theta_n=\max_{1\le k\le n}\theta_k$). Call this state ``1'', and call the state ``0'' otherwise (the current applicant is dominated by one already interviewed). Let $X=\set{0,1}$ be the state space and $V_n(x)$ be the expected payoff to the administrator (value function) in a full learning equilibrium when interviewing applicant $n$ in state $x\in X$.

Let us derive the Bellman equations. Suppose $2\le n\le N-1$ and $x_n=0$ (the current applicant is not the best among those already interviewed). Then by Lemma \ref{lem:eq_strategy} the administrator always rejects applicant $n$ and moves on to interviewing applicant $n+1$. Because the rank orders of $\set{\theta_1,\dots,\theta_N}$ are equally likely, so are the rank orders of $\set{\theta_1,\dots,\theta_{n+1}}$ and the conditional probability of the next state $x_{n+1}$ is
\begin{equation*}
\Pr(x_{n+1}=1 \mid x_n=0)=\Pr(x_{n+1}=1)=\Pr\left(\theta_{n+1}=\max_{1\le k\le n+1}\theta_k\right)=\frac{1}{n+1}.
\end{equation*}
Therefore we obtain the Bellman equation
\begin{equation}
V_n(0)=\frac{1}{n+1}V_{n+1}(1)+\frac{n}{n+1}V_{n+1}(0)\label{eq:bellman0}
\end{equation}
for $n\ge 2$. Note that since $\theta_1=\max_{1\le k\le 1}\theta_k$ trivially, the state is always $x_1=1$ when $n=1$, and therefore $V_1(0)$ is undefined. If we define $V_1(0)$ by \eqref{eq:bellman0} for $n=1$, then \eqref{eq:bellman0} holds for all $n=1,\dots,N-1$.

Suppose next that $1\le n\le N-1$ and $x_n=1$. By the definition of the state $x_n=1$, we have $\theta_n=\max_{1\le k\le n}\theta_k$. Lemma \ref{lem:thetamax} then implies $y_n>\max_{1\le k\le n-1}y_k$. Then $p_n\coloneqq \sigma_n^*(y_1,\dots,y_n)\ge c$ by \eqref{eq:sigman}. Conditional on $x_n=1$, $p_n\ge c$, and applicant $n$ is accepted, the expected payoff to the administrator is
\begin{align*}
    &\Pr\left(\theta_n=\max_{1\le k\le N}\theta_k \mid x_n=1\right)\\
    &=\Pr(\text{$n$ is best among all} \mid \text{$n$ is best among first $n$})\\
    &=\Pr(\text{$n$ is best among all and first $n$})/\Pr(\text{$n$ is best among first $n$})\\
    &=\Pr(\text{$n$ is best among all})/\Pr(\text{$n$ is best among first $n$})\\
    &=(1/N)/(1/n)=\frac{n}{N}.
\end{align*}

If applicant $n$ is rejected (with probability $1-p_n$), the continuation value is the same as in the case $x_n=0$. Therefore we obtain the Bellman equation
\begin{equation}
V_n(1)=\max_{c\le p_n\le 1}\set{p_n\frac{n}{N}+(1-p_n)\left(\frac{1}{n+1}V_{n+1}(1)+\frac{n}{n+1}V_{n+1}(0)\right)}.\label{eq:bellman1}
\end{equation}

The following proposition characterizes the value functions.

\begin{prop}\label{prop:value}
The value functions in a full learning equilibrium satisfy $V_N(0)=0$, $V_N(1)=1$, and
\begin{subequations}\label{eq:b}
\begin{align}
    V_n(0)&=\frac{1}{n+1}V_{n+1}(1)+\frac{n}{n+1}V_{n+1}(0),\label{eq:b0}\\
    V_n(1)&=\max_{c\le p_n\le 1}\set{p_n\frac{n}{N}+(1-p_n)V_n(0)}\label{eq:b1}\\
    &=\max\set{c\frac{n}{N}+(1-c)V_n(0),\frac{n}{N}}>0.\label{eq:Vn1max}
\end{align}
\end{subequations}
\end{prop}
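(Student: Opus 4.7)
The proposition is essentially a synthesis: the Bellman equations \eqref{eq:bellman0} and \eqref{eq:bellman1} have already been derived in the text preceding the statement, so the plan is to (a) pin down the terminal values at $n=N$, (b) rewrite \eqref{eq:bellman1} using \eqref{eq:bellman0} to obtain \eqref{eq:b1}, (c) solve the linear program in $p_n$ to get the closed form in \eqref{eq:Vn1max}, and (d) verify positivity.

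For the terminal conditions, I would argue as follows. When $n=N$ and $x_N=0$, Lemma \ref{lem:eq_strategy} forces the administrator to reject (since $\sigma_N^*=0$ whenever $y_N\le\max_{k<N}y_k$), so no applicant is hired and $V_N(0)=0$. When $x_N=1$, applicant $N$ is known to be the overall best, so accepting yields payoff $1$ and rejecting yields $0$. The administrator's problem reduces to $\max_{c\le p_N\le 1} p_N$, giving $V_N(1)=1$ (consistently with plugging $n=N$, $V_N(0)=0$ into \eqref{eq:Vn1max}, which yields $\max\{c,1\}=1$).

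For $1\le n\le N-1$, I would adopt backward induction. Equation \eqref{eq:b0} is exactly \eqref{eq:bellman0}, already derived. To obtain \eqref{eq:b1}, substitute the right-hand side of \eqref{eq:b0} into \eqref{eq:bellman1}; this replaces $\tfrac{1}{n+1}V_{n+1}(1)+\tfrac{n}{n+1}V_{n+1}(0)$ by $V_n(0)$ and yields the stated form. To solve the maximization, observe that the objective $p_n\tfrac{n}{N}+(1-p_n)V_n(0)$ is affine in $p_n$ over the interval $[c,1]$, so the optimum is attained at an endpoint: at $p_n=1$ if $\tfrac{n}{N}\ge V_n(0)$ (value $\tfrac{n}{N}$), and at $p_n=c$ if $\tfrac{n}{N}<V_n(0)$ (value $c\tfrac{n}{N}+(1-c)V_n(0)$). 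Taking the larger of the two candidates for all $n$ gives the closed form in \eqref{eq:Vn1max}. Positivity $V_n(1)>0$ then follows immediately since $\tfrac{n}{N}\ge \tfrac{1}{N}>0$, so the max is bounded below by $\tfrac{n}{N}>0$.

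There is no real obstacle here: the heavy lifting—deriving the transition probabilities, writing down \eqref{eq:bellman0}, \eqref{eq:bellman1}, and invoking Lemma \ref{lem:eq_strategy} to constrain $p_n\in[c,1]$—is already done. The only care required is the rewriting step, i.e., noticing that the continuation value $\tfrac{1}{n+1}V_{n+1}(1)+\tfrac{n}{n+1}V_{n+1}(0)$ appearing in \eqref{eq:bellman1} coincides with $V_n(0)$ as defined by \eqref{eq:b0}, which is what allows the Bellman equation for $V_n(1)$ to be expressed purely in terms of $V_n(0)$ rather than $V_{n+1}$.
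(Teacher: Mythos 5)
Your proposal is correct and follows essentially the same route as the paper's proof: terminal values from the meaning of the state at $n=N$, equation \eqref{eq:b0} as a restatement of \eqref{eq:bellman0}, substitution of the continuation value to get \eqref{eq:b1}, and the endpoint argument for the affine objective to get \eqref{eq:Vn1max}. Your version simply spells out a few steps (the explicit LP solution and the positivity bound $n/N\ge 1/N>0$) that the paper leaves implicit.
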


\begin{proof}
If $x_N=0$ when interviewing the last ($N$-th) applicant, it means the last applicant is not the best and the administrator has failed to hire the best. Therefore $V_N(0)=0$. If $x_N=1$, the last applicant is the best and hence accepting with probability 1 is optimal. Therefore $V_N(1)=1$.

Equation \eqref{eq:b0} follows from \eqref{eq:bellman0} and the definition of $V_1(0)$.

Using \eqref{eq:bellman0} and \eqref{eq:bellman1}, we obtain \eqref{eq:b1}. Then \eqref{eq:Vn1max} holds because the objective function in \eqref{eq:b1} is linear in $p_n$ and the maximum occurs at one of the endpoints $p_n=c$ or $p_n=1$.
\end{proof}

Because the objective function in \eqref{eq:b1} is linear in $p_n$, it is clear that the optimal $p_n$ is $p_n=1$ if $V_n(0)\le n/N$ and $p_n=c$ if $V_n(0)>n/N$. Thus it remains to characterize when $V_n(0)/n\gtrless 1/N$. To this end, it is convenient to define the normalized value function $v_n(x)\coloneqq V_n(x)/n$. The following proposition derives a recursive formula for $v_n(x)$ and shows that it is decreasing in $n$.

\begin{prop}\label{prop:v}
The normalized value $v_n(x)=V_n(x)/n$ satisfies $v_N(0)=0$, $v_N(1)=1/N$, and
\begin{subequations}\label{eq:v}
\begin{align}
v_n(0)&=\frac{1}{n}v_{n+1}(1)+v_{n+1}(0),\label{eq:v0}\\
v_n(1)&=\max\set{c/N+(1-c)v_n(0),1/N}.\label{eq:v1}
\end{align}
\end{subequations}
Furthermore, $v_n(0)$ is strictly decreasing in $n$ and $v_n(1)$ is decreasing in $n$.
\end{prop}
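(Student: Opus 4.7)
The plan is to derive \eqref{eq:v} by pure algebra from the Bellman system established in Proposition \ref{prop:value}, and then establish the two monotonicity claims by exploiting the uniform lower bound $v_n(1)\ge 1/N$ together with the recursion for $v_n(0)$.

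First, the terminal conditions are immediate from $V_N(0)=0$ and $V_N(1)=1$: dividing by $N$ gives $v_N(0)=0$ and $v_N(1)=1/N$. Next, to obtain \eqref{eq:v0}, I would divide \eqref{eq:b0} by $n$ and rewrite each occurrence of $V_{n+1}(x)$ as $(n+1)\,v_{n+1}(x)$, producing
\[v_n(0)=\frac{V_{n+1}(1)}{n(n+1)}+\frac{V_{n+1}(0)}{n+1}=\frac{1}{n}v_{n+1}(1)+v_{n+1}(0).\]
The slightly asymmetric factor $1/n$ on the first term arises because the left-hand side is divided by $n$ while the right-hand side is naturally expressed in units of $1/(n+1)$. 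For \eqref{eq:v1}, I would divide \eqref{eq:Vn1max} by $n$ after substituting $V_n(0)=n\,v_n(0)$; both arguments of the max, namely $c(n/N)+(1-c)n\,v_n(0)$ and $n/N$, scale linearly in $n$, so pulling out the common factor $n$ inside the max yields exactly the claimed identity.

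For the monotonicity of $v_n(0)$, I would read off from \eqref{eq:v0} that
\[v_n(0)-v_{n+1}(0)=\frac{v_{n+1}(1)}{n}.\]
Since \eqref{eq:v1} (whose validity I have just established) implies $v_{n+1}(1)\ge 1/N>0$, the right-hand side is strictly positive, so $v_n(0)$ is strictly decreasing in $n$. For $v_n(1)$, the first argument of the max in \eqref{eq:v1} is $c/N+(1-c)v_n(0)$, which is decreasing in $n$ because $v_n(0)$ is decreasing and $1-c\ge 0$; the second argument $1/N$ is constant. A pointwise maximum of a decreasing function and a constant is decreasing, so $v_n(1)$ is decreasing in $n$.

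The main obstacle is minimal: the entire argument is essentially a bookkeeping exercise. The only subtle point is keeping careful track of the scaling factor when passing between $V_n$ and $v_n$, which produces the nonstandard factor $1/n$ (rather than $1/(n+1)$) in \eqref{eq:v0}. It is also worth noting that $v_n(1)$ need not be \emph{strictly} decreasing: once $c/N+(1-c)v_n(0)$ drops below $1/N$, the maximum is pinned at $1/N$ and stays there, which matches the weaker conclusion stated for $v_n(1)$.
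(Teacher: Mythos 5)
Your proposal is correct and follows essentially the same route as the paper: divide the Bellman system of Proposition \ref{prop:value} by $n$ to obtain \eqref{eq:v}, use the bound $v_{n+1}(1)\ge 1/N>0$ (read off from the max in \eqref{eq:v1}) together with $v_n(0)-v_{n+1}(0)=v_{n+1}(1)/n$ to get strict monotonicity of $v_n(0)$, and then observe that $v_n(1)$ is a monotone function of $v_n(0)$. Your closing remark about why $v_n(1)$ is only weakly decreasing is a correct and slightly more explicit version of the paper's one-line justification.
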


\begin{proof}
$v_N(0)=0$ and $v_N(1)=1/N$ follow from Proposition \ref{prop:value}. \eqref{eq:v0} and \eqref{eq:v1} follow by dividing \eqref{eq:b0} and \eqref{eq:Vn1max} by $n$, respectively.

To show that $v_n(0)$ is strictly decreasing in $n$, note that \eqref{eq:v1} implies $v_n(1)\ge 1/N>0$ for all $n$. Then \eqref{eq:v0} implies $v_{n+1}(0)-v_n(0)=-v_{n+1}(1)/n<0$, so $v_n(0)$ is strictly decreasing in $n$. Since $v_n(1)$ depends on $n$ only through $v_n(0)$ as in \eqref{eq:v1}, it follows that $v_n(1)$ is decreasing in $n$.
\end{proof}

We can now characterize the administrator's equilibrium strategy. Since by Proposition \ref{prop:v} $v_n(0)=V_n(0)/n$ is strictly decreasing in $n$, there exists a threshold $n^*$ such that the administrator accepts the current best applicant $n<n^*$ ($n\ge n^*$) with probability $p_n=c$ ($p_n=1$). Surprisingly, the threshold $n^*$ depends only on the number of applicants $N$ and not on the cost $c$.

In what follows, we defer long proofs to Appendix \ref{sec:proof}.

\begin{prop}\label{prop:sigma}
Define $n^*$ by
\begin{equation}
n^*=\min\set{n:\sum_{k=n}^{N-1}\frac{1}{k}\le 1}.\label{eq:n*}
\end{equation}
Then in a full learning equilibrium, the administrator's strategy is
\begin{equation}
\sigma_n^*(y_1,\dots,y_n)=\begin{cases*}
1 & if $n\ge n^*$ and $0<y_n=\max_{1\le k\le n}y_k$,\\
c & if $n<n^*$ and $0<y_n=\max_{1\le k\le n}y_k$,\\
0 & otherwise.
\end{cases*}\label{eq:sigma}
\end{equation}
\end{prop}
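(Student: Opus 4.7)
The plan is to combine the equilibrium conditions of Lemma \ref{lem:eq_strategy} with the Bellman characterization of Propositions \ref{prop:value}--\ref{prop:v}, and then identify the threshold by computing $v_n(0)$ in closed form.

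First, I would observe that equation \eqref{eq:sigman} of Lemma \ref{lem:eq_strategy} immediately yields the third case of \eqref{eq:sigma}: whenever the current applicant is not the best among those interviewed so far (i.e., $y_n \le \max_{1\le k\le n-1} y_k$), we must have $\sigma_n^* = 0$. Since $y_k \ge 0$, the complementary condition is precisely $0 < y_n = \max_{1\le k\le n} y_k$, which corresponds to state $x_n = 1$ in the Bellman analysis.

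Second, I would analyze the state $x_n = 1$ via the Bellman equation \eqref{eq:b1}. The objective $p_n \frac{n}{N} + (1-p_n) V_n(0)$ is linear in $p_n$, so its maximizer over $[c,1]$ lies at one of the endpoints: $p_n = 1$ is optimal iff $n/N \ge V_n(0) = n\, v_n(0)$, i.e., iff $v_n(0) \le 1/N$, while $p_n = c$ is optimal iff $v_n(0) > 1/N$. By Proposition \ref{prop:v}, $v_n(0)$ is strictly decreasing in $n$, so there is a unique threshold $n^*$ separating the two regimes, which produces the first two cases of \eqref{eq:sigma}.

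Third, I would identify this threshold with \eqref{eq:n*}. For $n \ge n^*$ we have $v_n(0) \le 1/N$, so \eqref{eq:v1} gives $v_n(1) = 1/N$. Plugging into \eqref{eq:v0} reduces the recursion to $v_n(0) = \tfrac{1}{nN} + v_{n+1}(0)$, which is valid as long as $v_{n+1}(1) = 1/N$, i.e., whenever $n+1 \ge n^*$. Telescoping down from $v_N(0) = 0$ then yields
\[
v_n(0) = \frac{1}{N}\sum_{k=n}^{N-1}\frac{1}{k} \qquad \text{for all } n \ge n^* - 1,
\]
and the two inequalities $v_{n^*}(0) \le 1/N < v_{n^*-1}(0)$ translate exactly to the characterization \eqref{eq:n*}.

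The main obstacle is verifying that the closed form for $v_n(0)$ extends down to $n = n^* - 1$, which is what pins $n^*$ to the harmonic-sum cutoff rather than to some weaker bound. This requires separately noting that $v_{n^*}(1) = 1/N$ holds by the defining inequality $v_{n^*}(0) \le 1/N$ together with \eqref{eq:v1}, so that the recursion can be evaluated at $n = n^* - 1$. A minor technical subtlety is the tie-breaking at exact equality $v_{n^*}(0) = 1/N$, where both $p_n = c$ and $p_n = 1$ maximize the Bellman objective; the statement selects $p_n = 1$, consistent with taking $n^*$ to be the \emph{smallest} $n$ for which the sum is $\le 1$, and this selection does not affect the value function.
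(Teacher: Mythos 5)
Your proposal is correct and follows essentially the same route as the paper: it defines the switching point of the Bellman policy via the strictly decreasing normalized value $v_n(0)$, derives the closed form $v_n(0)=\frac{1}{N}\sum_{k=n}^{N-1}\frac{1}{k}$ for $n$ at and just below the threshold, and matches the two inequalities $v_{n^*}(0)\le 1/N < v_{n^*-1}(0)$ to the harmonic-sum definition \eqref{eq:n*}, exactly as in the paper's argument (which names the value-function threshold $t$ and proves $t=n^*$, handling the edge case $t=1$ separately). Your added remark on the tie-breaking at exact equality is a fine observation but not needed for the result.
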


Intuitively, the administrator with positive `learning cost' lacks incentive to interview more applicants than a zero-learning-cost administrator. On the other hand, if the administrator assigns probability one to applicant $n^*-1$, since the zero-learning-cost administrator shares the same knowledge, the zero-learning-cost administrator also should choose applicant $n^*-1$ with probability one, which contradicts the optimality of the threshold $n^*$.

We can now construct a full learning equilibrium.

\begin{thm}\label{thm:eq}
For all $N\ge 2$ and $c\in [0,1)$, there exists a unique full learning equilibrium, which can be constructed as follows:
\begin{enumerate}
\item Define $n^*\in \set{1,\dots,N}$ by \eqref{eq:n*}.
\item Define $\sigma_n^*:H_n\to [0,1]$ by \eqref{eq:sigma}.
\item Define $s_n^*:\Sigma \times H_{n-1}\times (0,\infty)\to \set{0,1}$ by \eqref{eq:sn}.
\end{enumerate}
Let $\sigma^*=\set{\sigma_n^*}_{n=1}^N$. Then $(\sigma^*,s_1^*,\dots,s_N^*)$ is a full learning equilibrium and the success probability is $\pi_N^*=v_1(1)$, which can be computed by iterating \eqref{eq:v}.
\end{thm}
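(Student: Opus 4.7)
The plan is to (1) use the preceding results to get uniqueness almost for free, (2) verify that the candidate profile constructed in items (i)--(iii) is a Nash equilibrium that induces full learning on every path, and (3) read off the success probability. Uniqueness is essentially given: if $(\tilde\sigma^*,\tilde s^*)$ is any full learning equilibrium, Proposition \ref{prop:sigma} pins down $\tilde\sigma_n^*$ on histories visited with positive probability, and Lemma \ref{lem:eq_strategy} then pins down $\tilde s_n^*$ as the applicants' best response to $\tilde\sigma^*$.

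For existence, I would first verify the full learning property \eqref{eq:full_learn} by induction on $n$. The base case $n=1$ is immediate because in \eqref{eq:sigma} the administrator offers $\sigma_1^*(\theta_1)\in\set{c,1}\ge c$ to the first applicant, so $s_1^*=1$ and $y_1=\theta_1$. For the inductive step, if $\theta_n=\max_{1\le k\le n}\theta_k$, the induction hypothesis gives $\theta_n>\max_{1\le k\le n-1}y_k$ and \eqref{eq:sigma} yields $\sigma_n^*(y_1,\dots,y_{n-1},\theta_n)\ge c$, so the applicant completes and $y_n=\theta_n$ preserves \eqref{eq:full_learn}; if $\theta_n<\max_{1\le k\le n-1}\theta_k$, then by \eqref{eq:sn} the applicant declines and $y_n=0$, which also preserves \eqref{eq:full_learn}. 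I would then check the best response conditions: for the applicants, a direct payoff comparison against the two cases of \eqref{eq:sigma} recovers \eqref{eq:sn}; for the administrator, the optimality of \eqref{eq:sigma} against the applicants' response \eqref{eq:sn} is exactly what Propositions \ref{prop:value}--\ref{prop:sigma} establish via the Bellman equations \eqref{eq:b}.

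Finally, since the first applicant is trivially the best among the first one, the initial state is $x_1=1$ and the administrator's expected payoff under the equilibrium equals $V_1(1)=1\cdot v_1(1)=v_1(1)$, which is computed by iterating \eqref{eq:v} backward from $V_N(0)=0$, $V_N(1)=1$. The step I expect to demand the most care is verifying the administrator's best response against deviations that do not induce full learning: if the administrator chooses some $\sigma'\ne\sigma^*$, applicants best-respond via \eqref{eq:sn} with $\sigma'$ substituted, and one must argue that the resulting payoff cannot exceed $V_1(1)$. The key observation is that, under the applicants' best-response rule, any type $\theta$ that ends up revealing itself must have been promised acceptance probability at least $c$, so the administrator's problem reduces to a binary-state optimal stopping problem with the floor constraint $p_n\in\set{0}\cup[c,1]$, which is precisely the DP solved in Propositions \ref{prop:value}--\ref{prop:v} and whose maximizer is \eqref{eq:sigma}.
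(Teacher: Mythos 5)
Your proposal is correct and follows essentially the same route as the paper: uniqueness from Proposition \ref{prop:sigma} and Lemma \ref{lem:eq_strategy}, the full learning property \eqref{eq:full_learn} by induction on $n$ with the same two-case split, optimality read off from the dynamic program, and $\pi_N^*=V_1(1)=v_1(1)$ from the initial state $x_1=1$. Your closing concern about administrator deviations that break full learning (offering $p_n<c$ or ignoring the output) is well placed --- the paper's own proof of Theorem \ref{thm:eq} is terse on this point and effectively defers that verification to the deviation analysis in the proof of Theorem \ref{thm:optim}, which formalizes exactly the reduction to the constrained stopping problem you sketch.
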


Theorem \ref{thm:eq} shows that the full learning equilibrium can be described in words as follows. First, given the number of applicants $N$, define the threshold $n^*$ by \eqref{eq:n*}. Second, the administrator accepts the current best applicant $n$ with probability $c$ if $n<n^*$ and with probability 1 if $n\ge n^*$. Finally, applicant $n$ completes the interview (choose $a_n=1$) if and only if their ability exceeds past outputs $\set{y_1,\dots,y_{n-1}}$.

\iffalse
We next derive comparative statics results for the success probability $\pi_N^*$. Intuitively, as the number of applicants $N$ or the interview cost $c$ get larger, it becomes more difficult to identify the best applicant. The following proposition formalizes this intuition.

\begin{prop}\label{prop:pi_compstat}
Let $\pi_{N,c}^*$ be the success probability in the full learning equilibrium with $N$ applicants and interview cost $c$. Then $\pi_{N,c}^*$ is decreasing in both $N$ and $c$.
\end{prop}
\fi

\section{Optimality of full learning}

In Section \ref{sec:full_learn}, we proved that a unique full learning equilibrium exists and constructed it using dynamic programming. However, the model has many other equilibria. For instance, consider the strategy profile $(\sigma^*,s_1^*,\dots,s_N^*)$ defined by $s_n^*\equiv 0$ for all $n$ and $\sigma_n^*\equiv p_n$, where $(p_1,\dots,p_N)$ is any fixed probability vector. Let us show that this is an equilibrium. First, given that applicant $n$ is accepted with fixed probability $p_n$ regardless of the output, there is no incentive to complete the interview (which is costly) and $s_n^*(\sigma^*,y_1,\dots,y_{n-1},\theta)=0$ is optimal for any $(y_1,\dots,y_{n-1},\theta)$. Second, given that applicants decline interviews under $(s_1^*,\dots,s_N^*)$, the administrator always observes the output $y_n=0$ and gains no information. Thus randomly accepting an applicant yields the success probability
\begin{equation*}
\sum_{n=1}^Np_n\frac{(N-1)!}{N!}=\frac{1}{N},
\end{equation*}
which is independent of $(p_1,\dots,p_N)$ and hence $(p_1,\dots,p_N)$ is optimal.

Clearly, there are many other equilibria with partial learning (\eg, ignore the first applicant and adhere to full learning for $n\ge 2$). Given that many equilibria exist, a natural question is which one achieves the highest success probability. The following theorem shows the optimality of the full learning equilibrium.

\begin{thm}\label{thm:optim}
The full learning equilibrium is optimal in the sense that the success probability $\pi_N^*$ is the highest among all equilibria.
%Let $\pi_N$ be the success probability in any equilibrium. Then $\pi_N\le \pi_N^*$.
\end{thm}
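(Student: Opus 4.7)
The plan is to show that any equilibrium's success probability $\pi$ is bounded above by $\pi_N^*=v_1(1)$. The argument proceeds in three steps: a decomposition of $\pi$, an efficiency comparison between ``informed'' and ``uninformed'' administrator behavior at each period, and a backward induction on the equilibrium value function.

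First, let $A_n$ denote the event that applicant $n$ is the current best among the first $n$, and $B_n$ the event that they are the best overall. Since $B_n\subseteq A_n$, and by exchangeability of the rank structure $B_n$ is conditionally independent of the history $(y_1,\dots,y_n)$ given $A_n$ with $\Pr(B_n\mid A_n)=n/N$, the success probability of any equilibrium satisfies
\begin{equation*}
\pi=\sum_{n=1}^N\Pr(d_n=1,B_n)=\sum_{n=1}^N q_n\cdot\frac{n}{N},\qquad q_n\coloneqq\Pr(d_n=1,A_n).
\end{equation*}
Thus it suffices to bound $\sum_n q_n\cdot n/N$ over all equilibrium-feasible sequences $(q_n)$.

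Second, I classify the administrator's behavior at each period $n$ into two modes. In the \emph{informed} mode, $\sigma_n(\cdot,\theta)\ge c$ for every current-best type $\theta$, so the current-best applicant shows up and reveals $A_n$; combined with $\sigma_n(\cdot,0)=0$ (rejecting when not current-best, which is weakly optimal since $B_n\subseteq A_n$), this gives $\Pr(\tau=n)=q_n$ per reach, because acceptance occurs exactly when $A_n$ is revealed. In the \emph{uninformed} mode the current-best applicant does not show up, and the administrator may still accept blindly with some probability $r_n$ based on $y_n=0$; here $\Pr(\tau=n)=r_n=n\,q_n$ per reach because acceptance is independent of $A_n$, whose prior probability is $1/n$. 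Since the total stopping probability $\sum_n\Pr(\tau=n)\le 1$ is a budget constraint, informed mode is $n$-fold more efficient per unit of budget, and every uninformed action at period $n$ can be replaced by an informed action (with $p_n\in\{c,1\}$) possibly combined with rejection, yielding the same $q_n$ with no greater budget consumption.

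Third, define $\tilde V_n(x)$ as the supremum of the administrator's continuation value in any equilibrium at period $n$ conditional on state $x\in\{0,1\}$. I would prove by backward induction that $\tilde V_n(x)\le V_n(x)$, with $V_n(x)$ from Proposition~\ref{prop:value}. The base case $\tilde V_N(1)\le 1=V_N(1)$ and $\tilde V_N(0)\le 0=V_N(0)$ is immediate. At $x_n=0$, since $B_n\subseteq A_n$ the administrator cannot win at period $n$, so the value equals the continuation and is bounded by $V_n(0)$ via \eqref{eq:b0} and the inductive hypothesis. At $x_n=1$, by Step~2 the optimal choice uses informed mode with $p_n\in\{c,1\}$ (by linearity of the objective), yielding $\max\{c\cdot n/N+(1-c)V_n(0),\,n/N\}=V_n(1)$. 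Since the game begins in state~1 at $n=1$ (applicant~1 is trivially the current best), we conclude $\pi\le V_1(1)=v_1(1)=\pi_N^*$. The main obstacle is formalizing Step~2 rigorously and handling partial-learning histories in which the administrator's information state at period $n$ is a mixed posterior over $\{0,1\}$ rather than a pure state; the resolution is that for any such posterior the value is a convex combination bounded by $V_n(0)$ and $V_n(1)$, so the backward induction still dominates.
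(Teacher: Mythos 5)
Your overall strategy---bounding the value of an arbitrary equilibrium by the full-learning value functions $V_n(x)$ via backward induction---is the same as the paper's, and your Step 1 decomposition and the base case and state-$0$-with-learning branch of Step 3 are fine. The genuine gap is in Step 2 and in the final sentence of Step 3, i.e., precisely in the handling of histories where some applicant has \emph{not} been incentivized to reveal their type. Your proposed resolution---that at a partial-learning history the administrator's value is ``a convex combination bounded by $V_n(0)$ and $V_n(1)$''---does not work as stated, because the full-learning value functions are not the correct benchmarks there. Once some applicant $j<n$ has been skipped, the event ``applicant $n$ beats all \emph{observed} outputs'' no longer coincides with ``applicant $n$ is the current best'': the acceptance payoff conditional on that event drops from $n/N$ to $(n-1)/N$, applicant $n{+}1$ can no longer verify from $(y_1,\dots,y_n)$ whether they are the current best (so the incentive analysis of Lemma \ref{lem:eq_strategy} no longer applies), and the continuation is effectively a secretary problem with one fewer useful applicant. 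None of these objects is a convex combination of $V_n(0)$ and $V_n(1)$, and the usual ``more information is weakly better'' argument is not available because this is a game with incentive constraints, not a single-agent decision problem.

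This is exactly the point where the paper has to do real work. Its proof classifies the deviations at a state-$0$ history into (i) not learning applicant $n+1$ and accepting blindly with some $p\in[0,1]$, and (ii) learning with $p\in[c,1]$. Deviation (i) is analyzed by observing that a blindly rejected applicant might as well not exist, which reduces the continuation to a game with $N-1$ applicants; the comparison is then closed by two auxiliary facts, $V_{n,N}(0)\ge 1/N$ (Lemma \ref{lem:VnNlb}, which kills blind acceptance) and the monotonicity of $NV_{n,N}(x)$ in $N$ (Lemma \ref{lem:NVnN}, which shows $\frac{N-1}{N}V_{n,N-1}(0)\le V_{n,N}(0)$). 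Your proposal contains no analogue of these two lemmas, and the ``budget'' heuristic of Step 2 cannot substitute for them: the constraint set on the sequence $(q_n)$ is sequential, not the single inequality $\sum_n\Pr(\tau=n)\le 1$, and the floor $p_n\ge c$ forces a minimum budget expenditure at every learning period, which is the entire source of the $N^{-c}$ decay. To repair the argument you would need to prove, not assert, that replacing an uninformed action by an informed one weakly raises the continuation value, and that is essentially equivalent to establishing Lemmas \ref{lem:VnNlb} and \ref{lem:NVnN}.
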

Theorem \ref{thm:optim} depends crucially on the uniform cost parameter assumption. In the case of heterogeneous cost, this theorem may fail. For example, consider the cost profile $(c_1,\dots,c_N)=(1,0,0,\dots,0)$, where $c_n$ denotes the cost of applicant $n$. Because $c_1=1$, the only way to incentivize applicant 1 is to accept with probability 1, resulting in success probability $1/N$. Clearly, the administrator should ignore the first applicant. To analyze the heterogeneous cost problem formally, we need to assume that the administrator knows all applicants' costs ex-ante or at least knows the cost distribution if costs are drawn independently. The first scenario, where the administrator knows all applicants' costs, is a very strong assumption and not realistic. In the second scenario, where the administrator knows the cost distribution, the administrator's strategy might become path-dependent and hard to specify.

\section{Asymptotic behavior}

Consider the secretary problem with $N\ge 2$ applicants. Let $n_N^*$ be the threshold for experimenting determined by \eqref{eq:n*} and $\pi_N^*$ be the success probability under full learning. In this section we characterize the asymptotic behaviors of $n_N^*$ and $\pi_N^*$ as $N\to\infty$.

\begin{prop}\label{prop:nstar}
The threshold $n_N^*$ in \eqref{eq:n*} is increasing in $N$ and satisfies
\begin{equation}
\frac{N}{\e}\le n_N^*\le \frac{N-1}{\e}+2. \label{eq:n*bd}
\end{equation}
In particular, $\lim_{N\to\infty} n_N^*/N=1/\e=0.367\dotsc$.
\end{prop}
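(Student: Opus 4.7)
The plan is to translate the defining condition of $n_N^*$ into a statement about harmonic sums, then sandwich those sums by integrals to get the sharp bounds.

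First, I would rewrite \eqref{eq:n*} using the partial harmonic sum. Writing $S_N(n)\coloneqq\sum_{k=n}^{N-1}1/k$, the threshold is $n_N^*=\min\{n:S_N(n)\le 1\}$. For monotonicity in $N$, observe that $S_{N+1}(n)=S_N(n)+1/N$. If $n_{N+1}^*<n_N^*$, then evaluating at $n=n_{N+1}^*$ would give $S_N(n_{N+1}^*)=S_{N+1}(n_{N+1}^*)-1/N\le 1-1/N<1$, so $n_{N+1}^*$ would already satisfy the $N$-condition, contradicting minimality of $n_N^*$. Hence $n_{N+1}^*\ge n_N^*$.

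For the lower bound, I would use the fact that $1/k\ge\int_k^{k+1}dx/x$, giving $S_N(n_N^*)\ge\int_{n_N^*}^{N}dx/x=\ln(N/n_N^*)$. Since $S_N(n_N^*)\le 1$ by definition, this yields $\ln(N/n_N^*)\le 1$, i.e.\ $n_N^*\ge N/\e$. For the upper bound, assume first $n_N^*\ge 3$ so that minimality gives $S_N(n_N^*-1)>1$. Using $1/k\le\int_{k-1}^{k}dx/x$ bounds
\begin{equation*}
1<S_N(n_N^*-1)=\sum_{k=n_N^*-1}^{N-1}\frac{1}{k}\le\int_{n_N^*-2}^{N-1}\frac{dx}{x}=\ln\frac{N-1}{n_N^*-2},
\end{equation*}
so $n_N^*-2<(N-1)/\e$, giving the claimed bound. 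The small cases $n_N^*\in\{1,2\}$ are checked by hand: the target $(N-1)/\e+2$ is already $\ge 2$ for every $N\ge 2$, so the bound holds trivially there. Combining the two bounds yields $1/\e\le n_N^*/N\le (N-1)/(\e N)+2/N$, and sending $N\to\infty$ gives $\lim n_N^*/N=1/\e$.

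The only subtle point I anticipate is the edge-case handling for the upper bound: the integral estimate $\int_{n-2}^{N-1}dx/x$ only makes sense once $n_N^*-2\ge 1$, so I need to treat $n_N^*\le 2$ separately. This is a minor nuisance rather than a substantive obstacle since in those cases the bound $(N-1)/\e+2\ge 2$ is vacuously satisfied. Everything else is a direct application of standard integral comparisons for monotone functions.
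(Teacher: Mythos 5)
Your proposal is correct and follows essentially the same route as the paper: the same set-inclusion/minimality argument for monotonicity, the same integral comparisons $\sum_{k=t}^{N-1}1/k\ge\log(N/t)$ and $\sum_{k=t-1}^{N-1}1/k\le\log\frac{N-1}{t-2}$ for the two bounds, and the same separate treatment of the small cases $n_N^*\le 2$. No gaps.
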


The success probability $\pi_N^*$ can be explicitly computed; see \eqref{eq:piN} in the Appendix. The following theorem shows that $\pi_N^*$ exhibits a power law decay with exponent $-c$.

\begin{thm}\label{thm:pilim}
Let $\Gamma$ be the gamma function. If $c\in [0,1)$, then
\begin{equation}
\lim_{N\to\infty}N^c\pi_N^*=\frac{\e^{c-1}}{\Gamma(2-c)}.\label{eq:pi*c>0}
\end{equation}
In particular, if $c=0$ then $\lim_{N\to\infty} \pi_N^*=1/\e=0.367\dotsc$.
\end{thm}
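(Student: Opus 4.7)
The plan is to derive an exact closed-form expression for $\pi_N^*$ by solving the Bellman recursion of Proposition \ref{prop:v} explicitly, and then perform an elementary asymptotic analysis driven by Proposition \ref{prop:nstar}.

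First, I would exploit the fact (via Proposition \ref{prop:sigma}) that the recursion for $v_n(0)$ splits into two regimes at the threshold $n^*$. For $n\ge n^*-1$ one has $v_{n+1}(1)=1/N$, so iterating \eqref{eq:v0} downward from $v_N(0)=0$ yields $v_n(0)=\frac{1}{N}\sum_{k=n}^{N-1}\frac{1}{k}$. For $1\le n\le n^*-2$, substituting $v_{n+1}(1)=c/N+(1-c)v_{n+1}(0)$ into \eqref{eq:v0} produces the first-order linear recursion
\begin{equation*}
v_n(0)=\frac{c}{nN}+\frac{n+1-c}{n}\,v_{n+1}(0).
\end{equation*}

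Next I would solve this recursion with the integrating factor $b_n\coloneqq \Gamma(n+1-c)/(n-1)!$, which satisfies $\frac{n+1-c}{n}b_n=b_{n+1}$; multiplying through telescopes the recursion to $b_nv_n(0)-b_{n+1}v_{n+1}(0)=\frac{c\,\Gamma(n+1-c)}{n!\,N}$. Summing from $n=1$ to $n=n^*-2$ and applying the identity
\begin{equation*}
\sum_{n=0}^{M-1}\frac{\Gamma(n+1-c)}{n!}=\frac{\Gamma(M+1-c)}{(1-c)(M-1)!},
\end{equation*}
verifiable by induction on $M$, gives a closed form for $v_1(0)$. Substituting into $\pi_N^*=v_1(1)=c/N+(1-c)v_1(0)$, the leftover $c/N$ cancels against the $-c\,\Gamma(1-c)/N$ contribution coming from the $n=0$ term of the sum identity, leaving
\begin{equation*}
\pi_N^*=\frac{1}{\Gamma(1-c)}\cdot\frac{\Gamma(n^*-c)}{\Gamma(n^*-1)}\cdot\frac{1}{N}\left[\sum_{k=n^*-1}^{N-1}\frac{1}{k}+\frac{c}{1-c}\right],
\end{equation*}
which is the formula \eqref{eq:piN} referenced in the appendix, and which reduces to the classical Beckmann expression $\frac{n^*-1}{N}\sum_{k=n^*-1}^{N-1}\frac{1}{k}$ at $c=0$.

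Three ingredients then deliver the asymptotic. (i)~Proposition \ref{prop:nstar} gives $n^*/N\to 1/\e$. (ii)~The definition \eqref{eq:n*} of $n^*$ directly yields $\sum_{k=n^*}^{N-1}\frac{1}{k}\le 1<\sum_{k=n^*-1}^{N-1}\frac{1}{k}$, so the bracketed sum converges to $1$ since $1/(n^*-1)\to 0$. (iii)~Stirling's formula in the form $\Gamma(z+a)/\Gamma(z)\sim z^a$ as $z\to\infty$ gives $\Gamma(n^*-c)/\Gamma(n^*-1)\sim(n^*-1)^{1-c}\sim(N/\e)^{1-c}$. Multiplying the closed form by $N^c$ and combining yields
\begin{equation*}
\lim_{N\to\infty}N^c\pi_N^*=\frac{1}{\Gamma(1-c)}\cdot\e^{c-1}\cdot\left(1+\frac{c}{1-c}\right)=\frac{\e^{c-1}}{(1-c)\Gamma(1-c)}=\frac{\e^{c-1}}{\Gamma(2-c)}.
\end{equation*}

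I expect the main obstacle to be the algebraic bookkeeping needed to collapse the integrating-factor telescope plus the gamma partial-sum identity into the clean formula above; once that formula is in hand the asymptotic step is routine. A small sanity check at $c=0$ (where both $c/N$ and $c/(1-c)$ vanish) confirms that the formula recovers the classical limit $\pi_N^*\to 1/\e$ stated in the theorem.
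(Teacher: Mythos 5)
Your proposal is correct, and it reaches the limit by a genuinely different route in the key analytic step. Both arguments start from the same regime-split recursion (acceptance probability $c$ below the threshold $n^*$, probability $1$ above it) and both use $n^*/N\to 1/\e$ together with the squeeze $1<\sum_{k=n^*-1}^{N-1}1/k\le 1+1/(n^*-1)$ for the above-threshold contribution. The difference is in how the below-threshold product is handled. The paper keeps the partial products $S_{n}(c)=\prod_{k\le n}(1-c/k)$ symbolic, shows $n^cS_n(c)\to 1/\Gamma(1-c)$ via the Gauss product formula for $\Gamma$, and then evaluates $N^{c-1}\sum_{n=1}^{n^*-1}S_{n-1}(c)$ asymptotically as a Riemann sum for $\int_0^{1/\e}x^{-c}\,\mathrm{d}x$, which requires an auxiliary cutoff $n_1=\lfloor N^\alpha\rfloor$ with $\alpha\in(0,1-c)$ and an $\epsilon$-sandwich. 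You instead sum that series \emph{exactly}: your integrating factor $b_n=\Gamma(n+1-c)/(n-1)!$ is legitimate (indeed $\frac{n+1-c}{n}b_n=b_{n+1}$, and since $S_{n}(c)=\Gamma(n+1-c)/(\Gamma(1-c)\,n!)$ your telescoping identity $\sum_{n=0}^{M-1}\Gamma(n+1-c)/n!=\Gamma(M+1-c)/((1-c)(M-1)!)$ is exactly the closed form of $\sum S_{n}(c)$, and it does check by induction), the advertised cancellation of the $c/N$ term goes through, and the resulting closed form collapses the paper's pre-asymptotic expression \eqref{eq:piN} into a single product to which Stirling's ratio asymptotic $\Gamma(n^*-c)/\Gamma(n^*-1)\sim (n^*)^{1-c}$ applies. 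What your route buys is the elimination of the Riemann-sum bookkeeping (and of the Gauss product formula, replaced by Stirling); what it costs is the algebraic identity, which you correctly flag as the main bookkeeping burden. Two cosmetic caveats: your closed form is a simplification of, not literally identical to, the paper's \eqref{eq:piN}; and the formula presumes $n^*\ge 2$ (so that $(n^*-2)!$ and the empty/nonempty sums make sense), which is harmless since $n^*\to\infty$ by Proposition \ref{prop:nstar}.
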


The administrator may be interested not only in the success probability but also in the expected length of search. To address this issue, let $\tau_N^*$ be the index of the accepted applicant in the full learning equilibrium, which is a random variable. The following theorem shows that $\E[\tau_N^*]$ equals $N\pi_N^*$ and consequently exhibits a power law with exponent $1-c$.

\begin{thm}\label{thm:tau}
If $c\in [0,1)$, then $\E[\tau_N^*]=N\pi_N^*$ and
\begin{equation}
\lim_{N\to\infty} N^{c-1}\E[\tau_N^*]=\frac{\e^{c-1}}{\Gamma(2-c)}.\label{eq:tau}
\end{equation}
\end{thm}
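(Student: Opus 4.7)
The plan is to first establish the identity $\E[\tau_N^*]=N\pi_N^*$ and then deduce the asymptotic \eqref{eq:tau} from Theorem \ref{thm:pilim}, since multiplying both sides by $N^{c-1}$ gives $N^{c-1}\E[\tau_N^*]=N^c\pi_N^*$, whose limit is $\e^{c-1}/\Gamma(2-c)$. So the real work lies in the identity, and the asymptotic step is essentially automatic.

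For the identity, the key observation is that in the full learning equilibrium the administrator accepts applicant $n$ only when $n$ is the current best, since by Proposition \ref{prop:sigma} we have $\sigma_n^*(y_1,\dots,y_n)=0$ whenever $y_n\le\max_{k<n}y_k$, and Lemma \ref{lem:thetamax} then implies $\theta_n=\max_{1\le k\le n}\theta_k$ on the event $\{\tau_N^*=n\}$. Decomposing the success probability by the index of the accepted applicant,
\begin{equation*}
\pi_N^*=\sum_{n=1}^N\Pr\Bigl(\tau_N^*=n,\ \theta_n=\max_{1\le k\le N}\theta_k\Bigr)=\sum_{n=1}^N\Pr(\tau_N^*=n)\,\Pr\Bigl(\theta_n=\max_{1\le k\le N}\theta_k \bigm| \tau_N^*=n\Bigr).
\end{equation*}
Just as in the derivation of \eqref{eq:bellman1}, the conditional probability equals $n/N$, so $\pi_N^*=\frac{1}{N}\sum_n n\Pr(\tau_N^*=n)=\E[\tau_N^*]/N$ (with the convention $\tau_N^*=0$ when no applicant is hired, whose contribution vanishes). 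The asymptotic claim then follows directly from Theorem \ref{thm:pilim}.

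The delicate point I anticipate is justifying $\Pr(\theta_n=\max_{1\le k\le N}\theta_k\mid\tau_N^*=n)=n/N$. The event $\{\tau_N^*=n\}$ is determined by (a) the relative ranks of $\theta_1,\dots,\theta_n$, which must be consistent with $\theta_n$ being the maximum of the first $n$ and with none of the earlier current-best opportunities leading to a hire, and (b) independent coin flips of the administrator at the earlier stages $k<n^*$. Neither component reveals anything about how $\theta_{n+1},\dots,\theta_N$ compare to $\max_{k\le n}\theta_k$, because the relative ranks across disjoint blocks are independent when the overall ranks are uniform. Hence conditioning on $\{\tau_N^*=n\}$ is informationally equivalent (for the purpose of ranking $\theta_n$ against $\theta_{n+1},\dots,\theta_N$) to conditioning on $\theta_n=\max_{1\le k\le n}\theta_k$, which by the computation in Section \ref{sec:full_learn} gives $n/N$.
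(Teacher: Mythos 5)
Your proof is correct, but it establishes the identity $\E[\tau_N^*]=N\pi_N^*$ by a genuinely different route from the paper. The paper computes the distribution of $\tau_N^*$ explicitly --- the probability of accepting applicant $n$ is $\frac{c}{n}S_{n-1}(c)$ for $n<n^*$ and $\frac{1}{n}S_{n^*-1}(c)\frac{n^*-1}{n-1}$ for $n\ge n^*$ --- sums to get a closed form for $\E[\tau_N^*]$, and observes that this expression coincides term by term with $N$ times the closed form \eqref{eq:piN} for $\pi_N^*$ obtained in the proof of Theorem \ref{thm:pilim}. You instead decompose $\pi_N^*$ over the value of $\tau_N^*$ and argue that $\Pr(\theta_n=\max_{1\le k\le N}\theta_k\mid\tau_N^*=n)=n/N$, which yields the identity without any computation. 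Your conditioning step is the crux and you correctly identify it as such: it is genuinely stronger than the computation in the derivation of \eqref{eq:bellman1} (which conditions only on $n$ being a record), since $\{\tau_N^*=n\}$ also encodes the full record structure of the first $n$ applicants and the administrator's coin flips. Your justification is right: $\{\tau_N^*=n\}$ is measurable with respect to the relative rank pattern of $(\theta_1,\dots,\theta_n)$ and independent randomization, it is contained in the event that $n$ is a record, and the rank pattern of the first $n$ is independent of whether $\max_{k\le n}\theta_k$ equals the global maximum (an event of probability $n/N$). Both proofs then get the asymptotic \eqref{eq:tau} for free from Theorem \ref{thm:pilim}. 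Your argument is more conceptual and explains \emph{why} the identity holds (conditional on stopping at $n$, the success probability is exactly $n/N$); the paper's computation is longer but additionally delivers the explicit acceptance probabilities and the closed form \eqref{eq:EtauN}, which may be of independent interest.
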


As an illustration, we consider a numerical example. Suppose we set the cost to $c=0.1$. %\footnote{This level of cost is likely unrealistically high because if the interview cost equals the flow utility from the job and the applicant discounts future utility at 10\%, then we get $c=0.1$.} 
For each $N$, we can easily compute $\pi_N^*=v_1(1)$ by iterating \eqref{eq:v}. Figure \ref{fig:prob} shows $\pi_N^*$ for $1\le N\le 1{,}000$, both for $c=0$ and $c=0.1$. When $c=0$, by Theorem \ref{thm:pilim}, the success probability $\pi_N^*$ converges to $1/\e\approx 0.37$ as $N\to\infty$. When $c=0.1$, by Theorem \ref{thm:pilim} $\pi_N^*$ converges to 0 but the convergence is slow. In fact, even with $N=1{,}000$ applicants, $\pi_N^*$ is over 0.2.

Figure \ref{fig:prob_loglog} shows the success probability $\pi_N^*$ for $1\le N\le 10^6$ in a log-log scale when $c=0.1$ as well as the asymptotic approximation
\begin{equation*}
\pi_N^*\sim \frac{\e^{c-1}}{\Gamma(2-c)}N^{-c}
\end{equation*}
implied by \eqref{eq:pi*c>0}. Since $\pi_N^*$ decays like $N^{-c}$, in a log-log scale we observe a straight line pattern with slope equal to $-c$.

\begin{figure}[!htb]
\centering
\begin{subfigure}{0.48\linewidth}
\includegraphics[width=\linewidth]{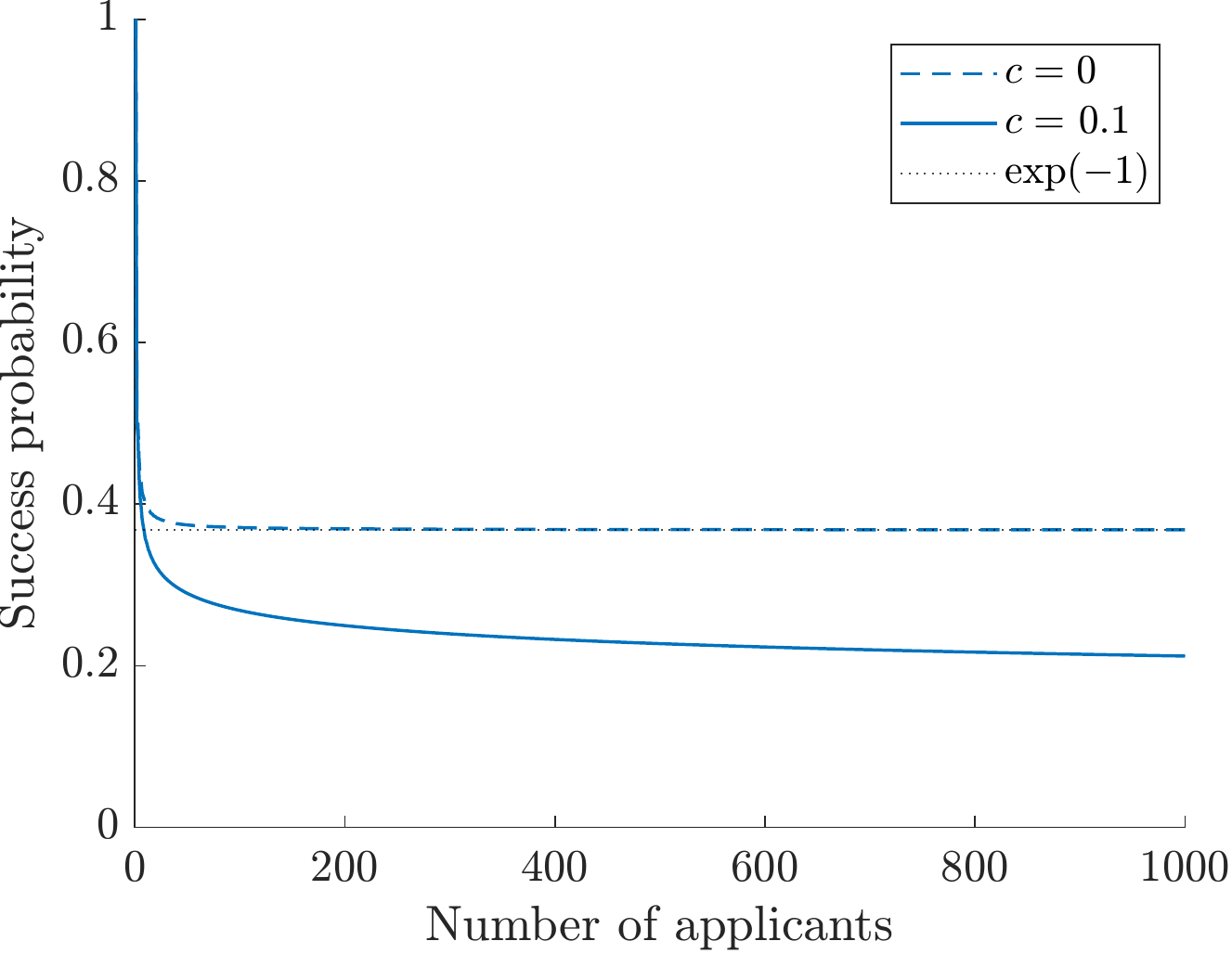}
\caption{Success probability with cost.}\label{fig:prob}
\end{subfigure}
\begin{subfigure}{0.48\linewidth}
\includegraphics[width=\linewidth]{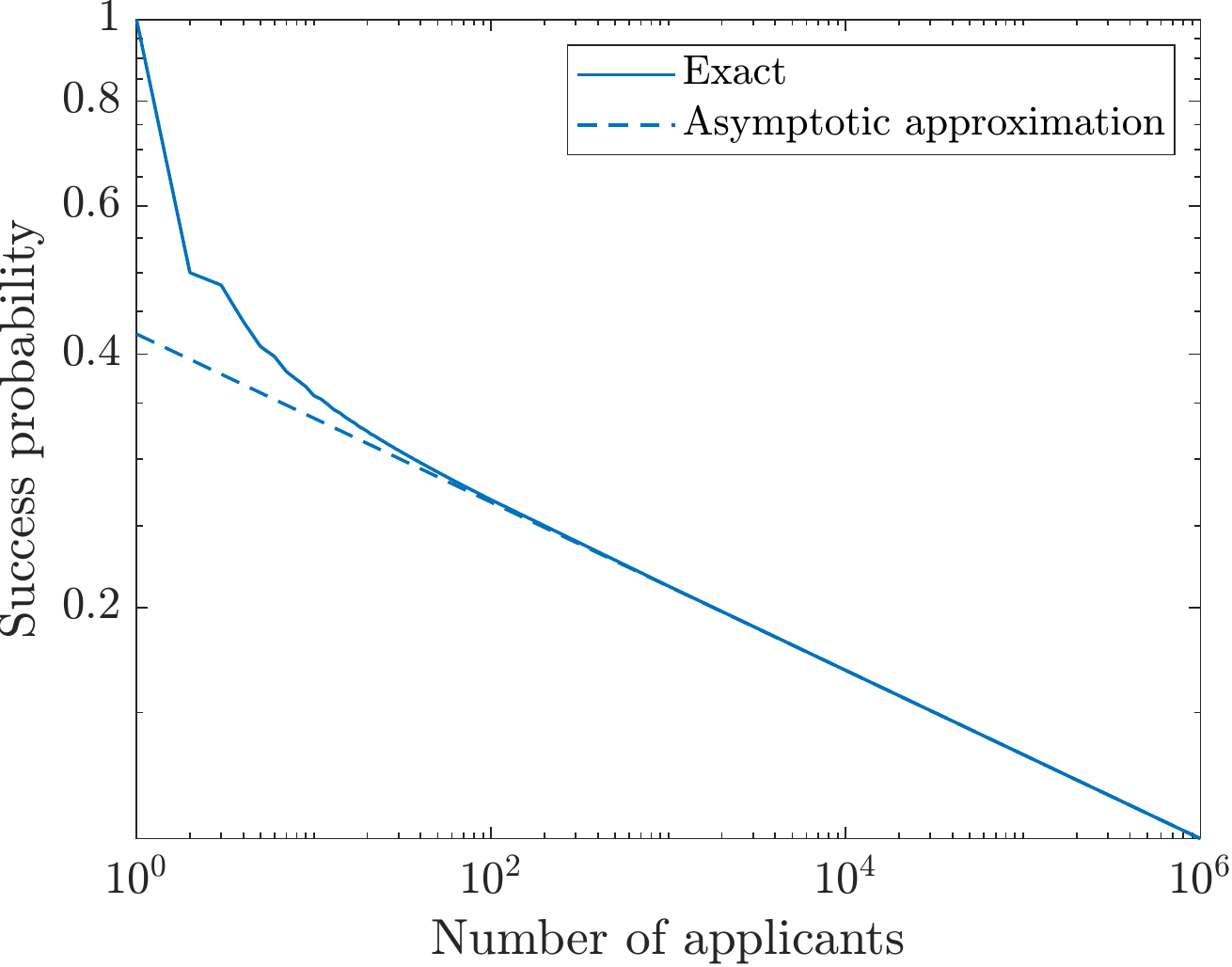}
\caption{Log-log plot of success probability.}\label{fig:prob_loglog}
\end{subfigure}
\end{figure}

\appendix

\section{Proofs}\label{sec:proof}

\subsection{Proof of Proposition \ref{prop:sigma}}

By the remark after Proposition \ref{prop:value}, the optimal behavior of the administrator is to accept the current best applicant with probability $p_n=1$ if $v_n(0)=V_n(0)/n\le 1/N$ and with probability $p_n=c$ if $v_n(0)=V_n(0)/n>1/N$. By Proposition \ref{prop:v}, we have $v_N(0)=0<1/N$. Therefore the threshold
\begin{equation}
t\coloneqq \min\set{n:v_n(0)\le 1/N}\label{eq:t}
\end{equation}
is well defined. Since by Proposition \ref{prop:v} $v_n(0)$ is strictly decreasing in $n$, it follows from the definition of $t$ that $v_n(0)\le 1/N$ for $n\ge t$ and $v_n(0)>1/N$ for $n<t$.

Let us show that $t$ equals $n^*$ in \eqref{eq:n*}. By the definition of $t$, we have $v_n(0)\le 1/N$ if and only if $n\ge t$. Then \eqref{eq:v1} implies
\begin{equation*}
v_{n+1}(1)=\max\set{c/N+(1-c)v_{n+1}(0),1/N}=1/N
\end{equation*}
for $n+1\ge t\iff n\ge t-1$, and \eqref{eq:v0} implies
\begin{equation}
v_n(0)=\frac{1}{Nn}+v_{n+1}(0)\label{eq:v0t}
\end{equation}
for $n\ge t-1$. Iterating \eqref{eq:v0t} and using $v_N(0)=0$, it follows that
\begin{equation}
v_n(0)=\frac{1}{N}\left(\frac{1}{n}+\frac{1}{n+1}+\dots+\frac{1}{N-1}\right)=\frac{1}{N}\sum_{k=n}^{N-1}\frac{1}{k}\label{eq:vn0}
\end{equation}
for $n\ge t-1$. In particular, setting $n=t$ in \eqref{eq:vn0} and using \eqref{eq:t}, we obtain
\begin{equation}
\frac{1}{N}\sum_{k=t}^{N-1}\frac{1}{k}=v_t(0)\le \frac{1}{N}\iff \sum_{k=t}^{N-1}\frac{1}{k}\le 1.\label{eq:tcond1}
\end{equation}
Comparing \eqref{eq:tcond1} to the definition of $n^*$ in \eqref{eq:n*}, we obtain $n^*\le t$. If $t=1$, then $1\le n^*\le t=1$, so $n^*=t$. Suppose $t\ge 2$. Since \eqref{eq:vn0} holds for $n\ge t-1$, setting $n=t-1$ and using the definition of $t$ in \eqref{eq:t}, we obtain
\begin{equation}
\frac{1}{N}\sum_{k=t-1}^{N-1}\frac{1}{k}=v_{t-1}(0)>\frac{1}{N}\iff \sum_{k=t-1}^{N-1}\frac{1}{k}>1.\label{eq:tcond2}
\end{equation}
Comparing \eqref{eq:tcond2} to the definition of $n^*$ in \eqref{eq:n*}, we obtain $n^*>t-1$. Therefore $n^*\ge t$ and hence $n^*=t$.

Finally, since $V_n(0)> n/N$ for $n<n^*$ and $V_n(0)\le n/N$ for $n\ge n^*$, it follows from \eqref{eq:b1} that accepting applicant $n$ with probability $p_n=\sigma_n^*(y_1,\dots,y_n)$ given by \eqref{eq:sigma} is optimal. \hfill \qedsymbol

\subsection{Proof of Theorem \ref{thm:eq}}
Uniqueness of $\sigma^*$ follows from Proposition \ref{prop:sigma}. Uniqueness of $s_n^*$ then follows from Lemma \ref{lem:eq_strategy}. The optimality of $s_n^*$ follows from Lemma \ref{lem:eq_strategy}. The optimality of $\sigma^*$ follows from Proposition \ref{prop:sigma}. Therefore $(\sigma^*,s_1^*,\dots,s_N^*)$ is an equilibrium.

To show that the equilibrium is full learning, it suffices to show \eqref{eq:full_learn}. Let us prove this by induction. For $n=1$, the condition $y_n>\max_{1\le k\le n-1}y_k$ trivially holds (because the maximum over an empty set is $-\infty$ by convention) and $s_1^*(\theta_1)\equiv 1$ by \eqref{eq:sn}. Therefore $y_1=a_1\theta_1=\theta_1$, so \eqref{eq:full_learn} holds. Suppose \eqref{eq:full_learn} holds for some $n$ and consider $n+1$. Let $\theta=\theta_{n+1}$.

If $\theta\le \max_{1\le k\le n}y_k$, by \eqref{eq:sn} we have $a_{n+1}=s_{n+1}^*(y_1,\dots,y_n,\theta)=0$, so $y_{n+1}=a_{n+1}\theta_{n+1}=0$. Since $\theta_{n+1}=\theta\le \max_{1\le k\le n}y_k=\max_{1\le k\le n}\theta_k$ by assumption and the induction hypothesis, we obtain
\begin{equation*}
\max_{1\le k\le n+1}y_k=\max\set{\max_{1\le k\le n}y_k,0}=\max_{1\le k\le n}y_k=\max_{1\le k\le n}\theta_k=\max_{1\le k\le n+1}\theta_k.
\end{equation*}

If $\theta>\max_{1\le k\le n}y_k$, by \eqref{eq:sn} we have $a_{n+1}=s_{n+1}^*(y_1,\dots,y_n,\theta)=1$, so $y_{n+1}=\theta_{n+1}$. Hence by assumption and the induction hypothesis, we obtain
\begin{equation*}
\max_{1\le k\le n+1}y_k=\max\set{\max_{1\le k\le n}y_k,\theta_{n+1}}=\max\set{\max_{1\le k\le n}\theta_k,\theta_{n+1}}=\max_{1\le k\le n+1}\theta_k.
\end{equation*}

Since \eqref{eq:full_learn} holds for $n+1$ regardless of $\theta$, by induction \eqref{eq:full_learn} holds for all $n$ until the game ends, and hence the equilibrium is full learning.

Finally, because the first applicant trivially satisfies $\theta_1=\max_{1\le k\le 1}\theta_k$, the state starts at $x_1=1$ and the probability of hiring the best applicant (success probability) is $\pi_N^*=V_1(1)=V_1(1)/1=v_1(1)$. \hfill \qedsymbol

\subsection{Proof of Theorem \ref{thm:optim}}

To prove Theorem \ref{thm:optim}, we establish a series of lemmas. Below, when necessary let $V_{n,N}(x)$ be the value function in the full learning equilibrium when the total number of applicants is $N$ and the administrator is interviewing applicant $n$ in state $x$.

\begin{lem}\label{lem:VnN0}
For $n=2,\dots,N$,
\begin{equation}
V_{n-1}(0)=\max_{c\le p_{n-1}\le 1}\set{\frac{p_{n-1}}{N}+\left(1-\frac{p_{n-1}}{n}\right)V_{n}(0)}\label{eq:VnN0}
\end{equation}
holds. Furthermore, if we define $V_{0}(0)$ by setting $n=1$ in \eqref{eq:VnN0}, we have $\pi_N^*=V_{0}(0)$.
\end{lem}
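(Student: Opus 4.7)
The plan is to combine the two Bellman equations from Proposition~\ref{prop:value}. First I would shift the index in \eqref{eq:b0} by one to obtain, for $n=2,\dots,N$, the identity $V_{n-1}(0)=\frac{1}{n}V_n(1)+\frac{n-1}{n}V_n(0)$. Then I would substitute the formula for $V_n(1)$ from \eqref{eq:b1}, pull the positive scalar $1/n$ inside the maximum, and absorb the constant term $\frac{n-1}{n}V_n(0)$ into the objective being maximized. The coefficient of $V_n(0)$ then simplifies as $\frac{1-p_n}{n}+\frac{n-1}{n}=\frac{n-p_n}{n}=1-\frac{p_n}{n}$. Relabeling the dummy variable $p_n$ as $p_{n-1}$ yields \eqref{eq:VnN0}.

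For the second assertion, I would specialize the derived formula at $n=1$, which by definition produces $V_0(0)=\max_{c\le p_0\le 1}\left\{\frac{p_0}{N}+(1-p_0)V_1(0)\right\}$. The objective is linear in $p_0$, so the maximum is attained at an endpoint, giving $V_0(0)=\max\left\{\frac{c}{N}+(1-c)V_1(0),\,\frac{1}{N}\right\}$. Comparing with \eqref{eq:Vn1max} at $n=1$ shows $V_0(0)=V_1(1)$, and by Theorem~\ref{thm:eq} we have $\pi_N^*=v_1(1)=V_1(1)$, completing the identification $\pi_N^*=V_0(0)$.

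I do not anticipate a serious obstacle: the entire argument is a mechanical substitution of one Bellman equation into another followed by straightforward algebraic simplification, together with a one-line check that the $n=1$ specialization reproduces the max defining $V_1(1)$. The only matter requiring attention is index bookkeeping, namely the shift from $n$ to $n-1$ in \eqref{eq:b0} and the renaming of the dummy maximization variable so that the final expression is indexed by $p_{n-1}$ while still being divided by $n$, not $n-1$.
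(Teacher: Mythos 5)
Your proposal is correct and follows essentially the same route as the paper: substitute \eqref{eq:b1} into the index-shifted \eqref{eq:b0}, absorb the constant into the maximum to get the coefficient $1-p_{n-1}/n$, and then note that the $n=1$ case reproduces the maximization defining $V_1(1)=\pi_N^*$. The only cosmetic difference is that the paper identifies $V_0(0)$ with $V_1(1)$ directly from \eqref{eq:b1} rather than passing through the endpoint form \eqref{eq:Vn1max}, but this changes nothing.
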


\begin{proof} 
Combining \eqref{eq:b0} and \eqref{eq:b1}, for $n\ge 2$ we obtain
\begin{align*}
V_{n-1}(0)&=\frac{1}{n}\max_{c\le p_{n-1}\le 1}\set{p_{n-1}\frac{n}{N}+(1-p_{n-1})V_{n}(0)}+\frac{n-1}{n}V_{n}(0)\\
&=\max_{c\le p_{n-1}\le 1}\set{\frac{p_{n-1}}{N}+\left(1-\frac{p_{n-1}}{n}\right)V_{n}(0)},
\end{align*}
which is \eqref{eq:VnN0}. Since $V_{0}(0)$ is defined by setting $n=1$ in \eqref{eq:VnN0}, we obtain
\begin{equation*}
V_{0}(0)=\max_{c\le p_{n-1}\le 1}\set{\frac{p_{n-1}}{N}+(1-p_{n-1})V_{1}(0)}=V_{1}(1)=\pi_N^*
\end{equation*}
by \eqref{eq:b1}.
\end{proof}

\iffalse
\begin{proof}[Proof of Proposition \ref{prop:pi_compstat}]
To make the dependence of $V_{n}(x)$ on $N$ and $c$ explicit, write $V_{n}(x)=V_{n,N,c}(x)$. By Lemma \ref{lem:VnN0}, it suffices to show that $\pi_{N,c}^*=V_{0,N,c}(0)$ is decreasing in both $N$ and $c$. Since the argument $x=0$ is fixed, we suppress it to simplify the notation.

Let us show by induction that $V_{n,N,c}$ is decreasing in $c$ for all $n$. For $n=N$, the claim is trivial because $V_{N,N,c}=0$ is constant. Suppose $V_{n,N,c}$ is decreasing in $c$ for some $n$ and let $c_1\le c_2$. Since $[c_2,1]\subset [c_1,1]$, it follows from \eqref{eq:VnN0}, $p/n\le p\le 1$, and the induction hypothesis that
\begin{align*}
V_{n-1,N,c_1}&=\max_{p\in [c_1,1]}\set{\frac{p}{N}+\left(1-\frac{p}{n}\right)V_{n,N,c_1}}\\
&\ge \max_{p\in [c_2,1]}\set{\frac{p}{N}+\left(1-\frac{p}{n}\right)V_{n,N,c_1}}\\
&\ge \max_{p\in [c_2,1]}\set{\frac{p}{N}+\left(1-\frac{p}{n}\right)V_{n,N,c_2}}=V_{n-1,N,c_2},
\end{align*}
so $V_{n-1,N,c}$ is decreasing in $c$. By induction, $V_{n,N,c}(0)$ is decreasing in $c$ for all $n$, and so is $\pi_{N,c}^*=V_{0,N,c}(0)$.
\end{proof}
\fi

\begin{lem}\label{lem:VnNlb}
$V_{n,N}(1)\ge n/N$ and $V_{n,N}(0)\ge 1/N$ for $n<N$.
\end{lem}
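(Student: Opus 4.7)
The plan is to prove both inequalities simultaneously by backward induction on $n$, starting from $n=N-1$ and decreasing.

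First I would dispose of $V_{n,N}(1) \ge n/N$ essentially for free: the expression \eqref{eq:Vn1max} writes $V_n(1)$ as a maximum whose second argument is $n/N$, so the bound is immediate from the max and holds for every $n$ (including $n=N$, where $V_N(1)=1=N/N$). This argument does not even require induction.

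For the bound $V_{n,N}(0) \ge 1/N$ when $n<N$, I would do backward induction. The base case $n=N-1$ is a one-line calculation: plugging $V_N(1)=1$ and $V_N(0)=0$ into \eqref{eq:b0} gives $V_{N-1}(0)=\frac{1}{N}\cdot 1+\frac{N-1}{N}\cdot 0 = 1/N$. For the inductive step, suppose the two bounds hold at index $n+1$ with $n+1<N$, i.e. $V_{n+1}(1)\ge (n+1)/N$ and $V_{n+1}(0)\ge 1/N$. Then \eqref{eq:b0} yields
\begin{equation*}
V_n(0)=\frac{1}{n+1}V_{n+1}(1)+\frac{n}{n+1}V_{n+1}(0)\ge \frac{1}{n+1}\cdot\frac{n+1}{N}+\frac{n}{n+1}\cdot\frac{1}{N}=\frac{1}{N}+\frac{n}{(n+1)N}\ge \frac{1}{N},
\end{equation*}
closing the induction.

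There is no serious obstacle here; the only subtlety worth flagging is that the induction must carry both statements together, because the recursion for $V_n(0)$ in \eqref{eq:b0} mixes $V_{n+1}(0)$ with $V_{n+1}(1)$. Since $V_{n+1}(1) \ge (n+1)/N$ holds for all $n+1\le N$ by the first step, this coupling is harmless and the argument is purely mechanical.
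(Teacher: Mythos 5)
Your proof is correct and is essentially the paper's argument: both get $V_{n,N}(1)\ge n/N$ directly from the max in \eqref{eq:Vn1max} and then feed that bound (together with $V_{n+1}(0)\ge 0$) into the recursion for $V_n(0)$. The only difference is cosmetic — the paper sets $p_{n-1}=1$ in \eqref{eq:VnN0} and concludes in one step, whereas your backward induction is superfluous, since your inductive step already follows from $V_{n+1}(1)\ge (n+1)/N$ and $V_{n+1}(0)\ge 0$ alone.
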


\begin{proof}
$V_{n,N}(1)\ge n/N$ immediately follows from \eqref{eq:Vn1max}. Since $V_{n,N}(0)\ge 0$, setting $p_{n-1}=1$ in \eqref{eq:VnN0} yields $V_{n-1,N}(0)\ge 1/N$ for all $n$, and hence $V_{n,N}(0)\ge 1/N$ For all $n<N$.
\end{proof}

\begin{lem}\label{lem:NVnN}
For each $n$ and $x\in\set{0,1}$, $NV_{n,N}(x)$ is increasing in $N\ge n$.
\end{lem}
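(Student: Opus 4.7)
The plan is to multiply through by $N$ and show that the rescaled value $W_{n,N}(x):=NV_{n,N}(x)$ satisfies a Bellman recursion in which the explicit dependence on $N$ has disappeared, leaving only the dependence through the state variable. Indeed, multiplying the equations in Proposition \ref{prop:value} by $N$ yields, for $n<N$,
\begin{align*}
W_{n,N}(0)&=\tfrac{1}{n+1}W_{n+1,N}(1)+\tfrac{n}{n+1}W_{n+1,N}(0),\\
W_{n,N}(1)&=\max\set{cn+(1-c)W_{n,N}(0),\,n},
\end{align*}
with terminal conditions $W_{N,N}(0)=0$ and $W_{N,N}(1)=N$. Only the terminal data depends on $N$, so I expect monotonicity in $N$ to propagate cleanly by backward induction.

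The main argument will fix an arbitrary $N\ge 2$ and prove $W_{n,N}(x)\le W_{n,N+1}(x)$ for every $n\in\set{1,\dots,N}$ and $x\in\set{0,1}$, by backward induction on $n$ from $n=N$. For the base case $n=N$, note that $W_{N,N}(0)=0\le W_{N,N+1}(0)$ since values are nonnegative. For state $1$, Lemma \ref{lem:VnNlb} applied to the $(N+1)$-applicant problem gives $V_{N,N+1}(1)\ge N/(N+1)$, hence $W_{N,N+1}(1)\ge N=W_{N,N}(1)$.

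For the inductive step, assume $W_{n+1,N}(x)\le W_{n+1,N+1}(x)$ for both $x\in\set{0,1}$. The state-$0$ recursion is a fixed convex combination of the two values at time $n+1$, so termwise monotonicity gives $W_{n,N}(0)\le W_{n,N+1}(0)$ immediately. For state $1$, since $1-c\ge 0$, the first argument of the max,
\begin{equation*}
cn+(1-c)W_{n,N}(0),
\end{equation*}
is increasing in $W_{n,N}(0)$, while the second argument is the constant $n$; hence $W_{n,N}(1)\le W_{n,N+1}(1)$, closing the induction.

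I do not anticipate a significant obstacle: the whole argument reduces to choosing the right rescaling so that the $N$-dependence localizes to the terminal conditions, after which the recursion is order-preserving in the data and backward induction is mechanical. The only place where care is needed is the base step, which is exactly where Lemma \ref{lem:VnNlb} is used to supply the initial inequality $W_{N,N+1}(1)\ge N$.
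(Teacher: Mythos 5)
Your proof is correct and follows essentially the same route as the paper's: rescale by $N$ so that the Bellman recursion has no explicit $N$-dependence, then propagate the inequality backward by monotonicity. The only cosmetic difference is that the paper collapses everything into the single reduced recursion for $NV_{n,N}(0)$ (the operators $T_n$) and anchors the comparison at $z_{N-1,N-1}=0<1=z_{N-1,N}$, whereas you keep the two-state system and anchor at $n=N$ via Lemma~\ref{lem:VnNlb}.
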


\begin{proof}
We first show that $NV_{n,N}(0)$ is increasing in $N$. Multiplying both sides of \eqref{eq:VnN0} by $N$ and letting $z_{n,N}\coloneqq NV_{n,N}(0)$, we obtain
\begin{equation}
z_{n-1,N}=\max_{c\le p_{n-1}\le 1}\set{p_{n-1}+(1-p_{n-1}/n)z_{n,N}}.\label{eq:bnN}
\end{equation}
For $n=1,2,\dotsc$, define the operator $T_n:[0,\infty)\to [0,\infty)$ by
\begin{equation*}
T_nz\coloneqq \max_{c\le p_{n-1}\le 1}\set{p_{n-1}+(1-p_{n-1}/n)z}.
\end{equation*}
Note that $z\mapsto T_nz$ is increasing in $z\ge 0$ because $p_{n-1}\le 1$ and $n\ge 1$. Since $V_{N,N}(0)=0$ and $V_{N-1,N}(0)=1/N$, we have $z_{N,N}=NV_{N,N}(0)=0$ and $z_{N-1,N}=NV_{N-1,N}(0)=1$. Therefore applying $T_{N-1},\dots,T_{n+1}$ (in this order) to $z_{N-1,N-1}=0<1=z_{N-1,N}$, we obtain
\begin{equation*}
z_{n,N-1}=T_{n+1}\dotsb T_{N-1}z_{N-1,N-1}\le T_{n+1}\dotsb T_{N-1}z_{N-1,N}=z_{n,N},
\end{equation*}
so $NV_{n,N}(0)=z_{n,N}$ is increasing in $N$.

Next we consider $NV_{n,N}(1)$. Multiplying both sides of \eqref{eq:b1} by $N$, we obtain
\begin{equation*}
NV_{n,N}(1)=\max_{c\le p_{n-1}\le 1}\set{p_{n-1}n+(1-p_{n-1})NV_{n,N}(0)},
\end{equation*}
which is increasing in $N$ because $NV_{n,N}(0)$ is.
\end{proof}

\begin{proof}[Proof of Theorem \ref{thm:optim}]
Take any equilibrium. Let $P_{0,N}=\pi_N$ be the success probability and $P_{n,N}(y_1,\dots,y_n)$ be the success probability conditional on having incentivized applicants up to $n$ to complete the interview (\ie, accepting with probability at least $c$ whenever the output achieves a new maximum and rejecting otherwise), observing outputs $y_1,\dots,y_n$, and adhering to the equilibrium strategy from that point on. Let $h_n=(y_1,\dots,y_n)$ be the history of outputs. 

Define $V_{0,N}(0)$ using \eqref{eq:b1}, which implies $V_{0,N}(0)=V_{1,N}(1)$. Then clearly the conclusions of Lemmas \ref{lem:VnNlb} and \ref{lem:NVnN} hold for $n=0$ and state $x=0$. Let us show by induction on $j=N-n$ that
\begin{equation}
P_{n,N}(h_n)\le \begin{cases*}
V_{n,N}(1) & if $y_n>\max_{1\le k\le n-1}y_k$,\\
V_{n,N}(0) & if $n=0$ or $y_n\le \max_{1\le k\le n-1}y_k$.
\end{cases*}\label{eq:Pnub}
\end{equation}
Since by assumption the administrator incentivizes the applicants up to $n$ to complete the interview, by Lemma \ref{lem:thetamax} we have
\begin{subequations}
\begin{align}
    y_n>\max_{1\le k\le n-1}y_k & \iff \theta_n>\max_{1\le k\le n-1}\theta_k, \label{eq:case1}\\
    y_n\le \max_{1\le k\le n-1}y_k & \iff \theta_n<\max_{1\le k\le n-1}\theta_k, \label{eq:case2}
\end{align}
\end{subequations}
where the strict inequality in \eqref{eq:case2} holds because there are no ties in $\theta$.

If $j=0$, then $n=N$. By assumption, full learning has occurred up to applicant $N$, so $P_{N,N}(h_N)=1$ if $y_N>\max_{1\le k\le N-1}y_k$ and $P_{N,N}(h_N)=0$ otherwise. Since $V_{N,N}(0)=0$ and $V_{N,N}(1)=1$, \eqref{eq:Pnub} holds.

Now suppose \eqref{eq:Pnub} holds whenever $N-n<j$ and consider $N-n=j<N$. Consider the second case in \eqref{eq:Pnub}, namely $n=0$ or $y_n\le \max_{1\le k\le n-1}y_k$. By the full learning assumption, applicant $n$ is not the best (if $n\ge 1$) or does not exist (if $n=0$) and in equilibrium the game continues. Consider arbitrary deviations from the equilibrium strategy. There can be two types of deviations: the first type is to decide not to learn the ability of applicant $n'=n+1$ but accept with arbitrary probability $p_n'\in [0,1]$, and the second type is to learn but accept with arbitrary probability $p_n'\in [c,1]$ conditional on being the current best.

Consider the first type of deviation. Then applicant $n'=n+1$ chooses $a_{n'}=0$. Since the probability that applicant $n'$ is the best is $1/N$, the continuation value is
\begin{equation*}
\underbrace{p_n'\frac{1}{N}}_{\text{$n'$ is best and accept}}+\underbrace{(1-p_n')\frac{N-1}{N}}_{\text{$n'$ is not best and reject}}\tilde{P}_{n,N-1}(h_n),
\end{equation*}
where $\tilde{P}$ is some equilibrium continuation value in a game with a total of $N-1$ applicants. (Note that since applicant $n'=n+1$ was not the best and rejected without learning, it is as if applicant $n'$ did not exist in the first place.) Since $p_n'\in [0,1]$ is arbitrary, by the induction hypothesis we obtain
\begin{equation*}
P_{n,N}(h_n)\le \max_{0\le p_n'\le 1}\set{p_n'\frac{1}{N}+(1-p_n')\frac{N-1}{N}V_{n,N-1}(0)}.
\end{equation*}
Since $V_{n,N-1}(0)\ge \frac{1}{N-1}$ by Lemma \ref{lem:VnNlb} (which is also valid for $n=0$ because $V_{0,N}(0)=V_{1,N}(1)$), the maximum is achieved when $p_n'=0$. Therefore
\begin{equation*}
P_{n,N}(h_n)\le \frac{N-1}{N}V_{n,N-1}(0)\le V_{n,N}(0)
\end{equation*}
by Lemma \ref{lem:NVnN}.

Next we consider the second type of deviation. By the same argument as the derivation of the Bellman equation \eqref{eq:bellman0} and the induction hypothesis, the continuation value can be bounded as
\begin{equation*}
P_{n,N}(h_n)\le \frac{1}{n+1}V_{n+1,N}(1)+\frac{n}{n+1}V_{n+1,N}(0)=V_{n,N}(0).
\end{equation*}
Therefore under either type of deviation, \eqref{eq:Pnub} holds for the second case when $N-n=j$.

Now consider the first case in \eqref{eq:Pnub}, namely $y_n>\max_{1\le k\le n-1}y_k$. Suppose the administrator accepts applicant $n$ with probability $p_n$. Since by assumption $y_n>0$, it must be $p_n\ge c$ to incentivize the applicant to complete the interview. In this case, by exactly the same argument as in the derivation of the Bellman equation \eqref{eq:bellman1} and the proof of the second case of \eqref{eq:Pnub}, we obtain the upper bound
\begin{equation*}
P_{n,N}(h_n)\le \max_{c\le p_n \le 1}\set{p_n\frac{n}{N}+(1-p_n)V_{n,N}(0)}=V_{n,N}(1).
\end{equation*}
Therefore \eqref{eq:Pnub} holds for the first case when $N-n=j$. By induction, \eqref{eq:Pnub} holds for all $n,N$.

Setting $n=0$ in \eqref{eq:Pnub} yields $P_{0,N}\le V_{0,N}(0)=V_{1,N}(1)=\pi_N^*.$
%Finally, let us prove by induction on $N$ that the success probability with $N$ applicants is bounded above by $\pi_N^*$. If $N=1$, we have $\pi_N^*=1$ and the claim is trivial. Suppose the claim holds up to $N-1$ and consider $N$. If at $n=1$ the administrator incentivizes the applicant to complete the interview and observe output $y_1$, the success probability is $P_{1,N}(y_1)\le V_{1,N}(1)=\pi_N^*$ by \eqref{eq:Pnub}. If the administrator does not incentivize and accepts applicant $n$ with probability $p$, by the induction hypothesis the success probability is bounded above by
%\begin{equation*}
%p\frac{1}{N}+(1-p)\frac{N-1}{N}\pi_{N-1}^*.
%\end{equation*}
%Noting that $\pi_N^*=V_{1,N}(1)\ge 1/N$ by Lemma \ref{lem:VnNlb} and $NV_{n,N}(1)$ is increasing in $N$ by Lemma \ref{lem:NVnN}, the success probability is bounded above by
%\begin{equation*}
%\max_{0\le p\le 1}\set{p\frac{1}{N}+(1-p)\frac{N-1}{N}\pi_{N-1}^*}=\frac{N-1}{N}V_{1,N-1}(1)\le V_{1,N}(1)=\pi_N^*.
%\end{equation*}
%Therefore the full learning equilibrium is optimal.
\end{proof}

\subsection{Proof of Proposition \ref{prop:nstar}}

For fixed $n$, since the sum $\sum_{k=n}^{N-1}1/k$ is increasing in $N$, we obtain
\begin{equation*}
\set{n:\sum_{k=n}^{N-1}\frac{1}{k}\le 1}\supset \set{n:\sum_{k=n}^N\frac{1}{k}\le 1}.
\end{equation*}
Taking the minimum of these sets and using \eqref{eq:n*}, we obtain $n_N^*\le n_{N+1}^*$, so $n_N^*$ is increasing in $N$.

Next we derive the bound \eqref{eq:n*bd}. To simplify the notation, let $t=n_N^*$. Using \eqref{eq:n*}, it follows that
\begin{equation*}
1\ge \sum_{k=t}^{N-1}\frac{1}{k}\ge \int_t^N\frac{1}{x}\diff x=\log\frac{N}{t} \implies t\ge \frac{N}{\e},
\end{equation*}
which is the lower bound in \eqref{eq:n*bd}.

Since $N\ge 2$, the upper bound \eqref{eq:n*bd} trivially holds if $t\le 2$, so assume $t\ge 3$. Then
\begin{equation*}
1<\sum_{k=t-1}^{N-1}\frac{1}{k}\le \int_{t-2}^{N-1}\frac{1}{x}\diff x= \log\frac{N-1}{t-2}\implies t\le \frac{N-1}{\e}+2,
\end{equation*}
which is the upper bound in \eqref{eq:n*bd}. \hfill \qedsymbol

\subsection{Proof of Theorem \ref{thm:pilim}}

To prove Theorem \ref{thm:pilim}, we need several lemmas.

\begin{lem}\label{lem:sumbound}
For $N\ge 6$, we have
\begin{equation}
    1<\sum_{n=n_N^*}^N\frac{1}{n-1}\le 1+\frac{1}{n_N^*-1}. \label{eq:sumbound}
\end{equation}
\end{lem}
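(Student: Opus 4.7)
The plan is to reduce the statement to the defining inequalities of $n_N^*$ by a shift of index. Let $t = n_N^*$ for brevity. The substitution $k = n-1$ turns the sum in question into
\begin{equation*}
\sum_{n=t}^{N}\frac{1}{n-1}=\sum_{k=t-1}^{N-1}\frac{1}{k}.
\end{equation*}
So the task is to sandwich $\sum_{k=t-1}^{N-1}1/k$ between $1$ and $1+1/(t-1)$.

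For the lower bound, the definition \eqref{eq:n*} says that $t$ is the smallest index with $\sum_{k=t}^{N-1}1/k\le 1$; provided $t\ge 2$, its predecessor $t-1$ violates this condition, giving $\sum_{k=t-1}^{N-1}1/k>1$ directly. The hypothesis $N\ge 6$ is exactly what guarantees $t\ge 2$: by Proposition \ref{prop:nstar}, $t\ge N/\e\ge 6/\e>2$, so in fact $t\ge 3$ and in particular $t-1\ge 2>0$, so the first term $1/(t-1)$ is finite.

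For the upper bound, I split off this first term:
\begin{equation*}
\sum_{k=t-1}^{N-1}\frac{1}{k}=\frac{1}{t-1}+\sum_{k=t}^{N-1}\frac{1}{k}\le \frac{1}{t-1}+1,
\end{equation*}
where the inequality is again the defining property of $t=n_N^*$ in \eqref{eq:n*}.

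There is no real obstacle here; the only point that needs any care is checking that $t\ge 2$ so that both $1/(t-1)$ is defined and the ``predecessor'' inequality of $n_N^*$ may be invoked, which is why the hypothesis $N\ge 6$ is imposed (and is comfortably sufficient via the lower bound in \eqref{eq:n*bd}).
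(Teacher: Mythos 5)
Your proof is correct and follows essentially the same route as the paper's: both reduce \eqref{eq:sumbound} via the index shift $k=n-1$ to the two defining inequalities of $n_N^*$ in \eqref{eq:n*} (the threshold satisfies the condition, its predecessor does not), using Proposition \ref{prop:nstar} to ensure $n_N^*\ge 2$ when $N\ge 6$. No issues.
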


\begin{proof}
By Proposition \ref{prop:nstar}, for $N\ge 6$ we have $n_N^*\ge 2$. To simplify the notation, let $t=n_N^*$. Using \eqref{eq:n*}, we obtain
\begin{equation*}
\sum_{k=t}^{N-1}\frac{1}{k}\le 1<\sum_{k=t-1}^{N-1}\frac{1}{k}.
\end{equation*}
Setting $k=n-1$ yields
\begin{equation*}
    \sum_{n=t}^N\frac{1}{n-1}-\frac{1}{t-1}=\sum_{n=t+1}^N\frac{1}{n-1}\le 1<\sum_{n=t}^N\frac{1}{n-1},
\end{equation*}
which is equivalent to \eqref{eq:sumbound}.
\end{proof}

\begin{lem}\label{lem:Sn}
Let $\Gamma$ be the gamma function. For $0\le c<1$, let $S_0(c)=1$ and
\begin{equation}
    S_n(c)\coloneqq \prod_{k=1}^n\left(1-\frac{c}{k}\right)\label{eq:Sn}
\end{equation}
for $n\ge 1$. Then
\begin{equation}
    \lim_{n\to\infty}n^cS_n(c)=\frac{1}{\Gamma(1-c)}.\label{eq:Snlim}
\end{equation}
\end{lem}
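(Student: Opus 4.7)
The plan is to rewrite the finite product $S_n(c)$ as a ratio of gamma function values and then invoke the standard asymptotic expansion of such ratios.

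First I would note that $(k-c)/k$ can be telescoped through the functional equation $\Gamma(z+1)=z\Gamma(z)$. Writing
\begin{equation*}
S_n(c)=\prod_{k=1}^n \frac{k-c}{k}=\frac{1}{n!}\prod_{k=1}^n (k-c)=\frac{1}{n!}\cdot\frac{\Gamma(n+1-c)}{\Gamma(1-c)},
\end{equation*}
where the last equality uses that $\prod_{k=1}^n(k-c)=(1-c)(2-c)\cdots(n-c)=\Gamma(n+1-c)/\Gamma(1-c)$ (iterating the recursion $\Gamma(n+1-c)=(n-c)\Gamma(n-c)$). Equivalently, $S_n(c)=\Gamma(n+1-c)/[\Gamma(n+1)\Gamma(1-c)]$.

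Next I would multiply by $n^c$ and apply the standard asymptotic ratio
\begin{equation*}
\frac{\Gamma(n+1-c)}{\Gamma(n+1)}\sim n^{-c}\qquad (n\to\infty),
\end{equation*}
which is a direct consequence of Stirling's formula $\Gamma(n+a)=\sqrt{2\pi}\,n^{n+a-1/2}\e^{-n}(1+O(1/n))$ applied with $a=1-c$ and $a=1$, or equivalently of the classical identity $\lim_{n\to\infty} n^{b-a}\Gamma(n+a)/\Gamma(n+b)=1$. Combining gives
\begin{equation*}
n^c S_n(c)=\frac{1}{\Gamma(1-c)}\cdot n^c\frac{\Gamma(n+1-c)}{\Gamma(n+1)}\longrightarrow \frac{1}{\Gamma(1-c)},
\end{equation*}
which is \eqref{eq:Snlim}.

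There is no real obstacle here: the only nontrivial input is the gamma ratio asymptotic, which is standard and holds for any fixed $c\in[0,1)$. One could alternatively give a self-contained derivation by taking logarithms, writing $\log S_n(c)=\sum_{k=1}^n\log(1-c/k)=-c\sum_{k=1}^n 1/k+\sum_{k=1}^n[\log(1-c/k)+c/k]$, using $\sum_{k=1}^n 1/k=\log n+\gamma+o(1)$ (Euler–Mascheroni), and observing that the second sum converges (since $\log(1-c/k)+c/k=O(1/k^2)$) to some limit $L(c)$. This yields $n^c S_n(c)\to \e^{-c\gamma+L(c)}$, and one identifies the limit with $1/\Gamma(1-c)$ via the Weierstrass product $1/\Gamma(1-c)=\e^{-\gamma c}\prod_{k=1}^\infty(1-c/k)\e^{c/k}$; but the gamma ratio route above is shorter.
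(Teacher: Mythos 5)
Your proof is correct, but it takes a genuinely different route from the paper. The paper proves \eqref{eq:Snlim} by quoting the Gauss product formula $\Gamma(z)=\lim_{n\to\infty}n^zn!/[z(z+1)\dotsb(z+n)]$, multiplying by $z$, and taking reciprocals; after the substitution $z=-c$ this \emph{is} the claimed limit, so the paper's proof is essentially a one-line reduction to a known representation of $\Gamma$. You instead telescope the product via the functional equation to get the closed form $S_n(c)=\Gamma(n+1-c)/[\Gamma(n+1)\Gamma(1-c)]$ and then invoke the gamma ratio asymptotic $\Gamma(n+1-c)/\Gamma(n+1)\sim n^{-c}$, which ultimately rests on Stirling's formula. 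Both arguments are standard and complete (your use of the recursion $\Gamma(n+1-c)=(n-c)\Gamma(n-c)$ is unproblematic since $c<1$ keeps all arguments positive). The trade-off is which classical fact one treats as given: the paper's route needs only the Gauss product and no asymptotic analysis at all, while yours buys an exact closed form for $S_n(c)$ (potentially useful for error estimates, e.g.\ a rate of convergence in \eqref{eq:Snlim}) at the cost of importing Stirling. Your parenthetical Weierstrass-product sketch is essentially equivalent to the paper's Gauss-product argument, so all three roads lead to the same place.
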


\begin{proof}
We start from the Gauss product formula of the gamma function
\begin{equation}
\Gamma(z)=\lim_{n\to\infty}\frac{n^z n!}{z(z+1)\dotsb(z+n)},\label{eq:gamma_gauss_prod}
\end{equation}
which is valid for all $z\in\C$ except nonpositive integers; see Equation (2.7) of \citet{Artin1964} or Section 8.19 of \citet{Rudin1976}. Multiplying both sides of \eqref{eq:gamma_gauss_prod} by $z$ and taking the reciprocal, we obtain
\begin{equation*}
\frac{1}{\Gamma(1+z)}=\frac{1}{z\Gamma(z)}=\lim_{n\to\infty}n^{-z}(1+z)\dotsb(1+z/n)=\lim_{n\to\infty}n^{-z}S_n(-z).
\end{equation*}
Letting $c=-z$ yields \eqref{eq:Snlim}.
\end{proof}

\begin{proof}[Proof of Theorem \ref{thm:pilim}]
By Proposition \ref{prop:nstar}, we have $n^*\to\infty$ as $N\to\infty$. Take $N\ge 6$ so that $n^*\ge 2$ by Proposition \ref{prop:nstar}. In the proof of Proposition \ref{prop:sigma}, we showed that the threshold $t$ defined by \eqref{eq:t} equals $n^*$ in \eqref{eq:n*}. Consequently, we have $V_n(0)/n\le 1/N$ for $n\ge n^*$ and $V_n(0)/n>1/N$ for $n<n^*$, and \eqref{eq:VnN0} reduces to
\begin{equation}
V_{n-1}(0)=\begin{cases*}
c/N+(1-c/n)V_n(0) & if $n<n^*$,\\
1/N+(1-1/n)V_n(0) & if $n\ge n^*$.
\end{cases*}\label{eq:Vn0}
\end{equation}
Iterating \eqref{eq:Vn0} for $n=1,\dots,n^*-1$ and using the definition of $S_n(c)$ in \eqref{eq:Sn}, we obtain
\begin{equation}
V_0(0)=\frac{c}{N}\sum_{n=1}^{n^*-1}S_{n-1}(c)+S_{n^*-1}(c)V_{n^*-1}(0).\label{eq:V0}
\end{equation}
Iterating \eqref{eq:Vn0} for $n=n^*,\dots,N$ and noting that $V_N(0)=0$, we obtain
\begin{equation}
V_{n^*-1}(0)=\frac{n^*-1}{N}\sum_{n=n^*}^N\frac{1}{n-1}.\label{eq:Vn*}
\end{equation}
Using Lemma \ref{lem:VnN0} and combining \eqref{eq:V0} and \eqref{eq:Vn*}, we obtain
\begin{equation}
N^c\pi_N^*=N^cV_0(0)=N^c\frac{c}{N}\sum_{n=1}^{n^*-1}S_{n-1}(c)+N^c\frac{n^*-1}{N}S_{n^*-1}(c)\sum_{n=n^*}^N\frac{1}{n-1}.\label{eq:piN}
\end{equation}
Letting $N\to\infty$ in the second term of \eqref{eq:piN}, we obtain
\begin{align}
    N^c\frac{n^*-1}{N}S_{n^*-1}(c)\sum_{n=n^*}^N\frac{1}{n-1}&=\left(\frac{N}{n^*-1}\right)^{c-1}(n^*-1)^cS_{n^*-1}(c)\sum_{n=n^*}^N\frac{1}{n-1} \notag \\
    &\to \e^{c-1}\cdot \frac{1}{\Gamma(1-c)}\cdot 1=\frac{\e^{c-1}}{\Gamma(1-c)}, \label{eq:lim1}
\end{align}
where the three limits follow from Proposition \ref{prop:nstar} and Lemmas \ref{lem:sumbound}, \ref{lem:Sn}.

We next evaluate the limit of the first term in \eqref{eq:piN} as $N\to\infty$. Since $c\in [0,1)$, we can choose $\alpha\in (0,1-c)$. Set $n_1=\floor{N^\alpha}$. Then $n_1\to\infty$ and $n_1/N\to 0$ as $N\to\infty$. Since $n^*/N\to 1/\e$ by Proposition \ref{prop:nstar}, we have $n_1<n^*-1$ for large enough $N$. For such $N$, write the first term in \eqref{eq:piN} as
\begin{equation*}
    N^c(c/N)\sum_{n=1}^{n^*-1}S_{n-1}(c)=N^c(c/N)\sum_{n=1}^{n_1-1}S_{n-1}(c)+N^c(c/N)\sum_{n=n_1}^{n^*-1}S_{n-1}(c).
\end{equation*}
Since $S_{n-1}(c)\le 1$ by the definition in \eqref{eq:Sn}, we can bound the first term in the right-hand side as
\begin{equation}
    N^c(c/N)\sum_{n=1}^{n_1-1}S_{n-1}(c)\le N^c(c/N)(n_1-1)\le N^c(c/N)N^\alpha= cN^{\alpha-1+c}\to 0. \label{eq:lim2}
\end{equation}
To estimate the second term, take any $\epsilon<1/\Gamma(1-c)$. Since $n\ge n_1\to\infty$ as $N\to\infty$, by Lemma \ref{lem:Sn}, we have
\begin{equation*}
    \left(\frac{1}{\Gamma(1-c)}-\epsilon\right)n^{-c}\le S_{n-1}(c)\le \left(\frac{1}{\Gamma(1-c)}+\epsilon\right)n^{-c}
\end{equation*}
for any $n\ge n_1$ for large enough $N$. Therefore
\begin{align*}
    c\left(\frac{1}{\Gamma(1-c)}-\epsilon\right)\sum_{n=n_1}^{n^*-1}(n/N)^{-c}\frac{1}{N} &\le N^c(c/N)\sum_{n=n_1}^{n^*-1}S_{n-1}(c)\\
    &\le c\left(\frac{1}{\Gamma(1-c)}+\epsilon\right)\sum_{n=n_1}^{n^*-1}(n/N)^{-c}\frac{1}{N}.
\end{align*}
Letting $N\to\infty$, using $n_1/N\to 0$ and $n^*/N\to 1/\e$, using the definition of the Riemann integral, and sending $\epsilon\to 0$, we obtain
\begin{equation}
    \lim_{n\to\infty}N^c(c/N)\sum_{n=n_1}^{n^*-1}S_{n-1}(c)=\frac{c}{\Gamma(1-c)}\int_0^{1/\e} x^{-c}\diff x=\frac{c}{\Gamma(1-c)}\frac{\e^{c-1}}{1-c}. \label{eq:lim3}
\end{equation}
Combining \eqref{eq:piN}--\eqref{eq:lim3}, it follows that
\begin{equation*}
    \lim_{N\to\infty}N^c\pi_N^*=\frac{\e^{c-1}}{\Gamma(1-c)}\left(\frac{c}{1-c}+1\right)=\frac{\e^{c-1}}{(1-c)\Gamma(1-c)}=\frac{\e^{c-1}}{\Gamma(2-c)}. \qedhere
\end{equation*}
\end{proof}

\subsection{Proof of Theorem \ref{thm:tau}}

In the full learning equilibrium, by Theorem \ref{thm:eq} the current best applicant $n$ is accepted with probability $c$ if $n<n^*$ and with probability 1 if $n\ge n^*$. Furthermore, under full learning, the probability that the applicant $n$ is the best among the first $n$ is $1/n$. Therefore on equilibrium path, the probability that applicant $n<n^*$ is accepted is
\begin{equation*}
\left[\prod_{k=1}^{n-1}\left(1-\frac{c}{k}\right)\right]\frac{c}{n}=\frac{c}{n}S_{n-1}(c),
\end{equation*}
where we have used \eqref{eq:Sn}. Similarly, the probability that applicant $n\ge n^*$ is accepted is
\begin{equation*}
\left[\prod_{k=1}^{n^*-1}\left(1-\frac{c}{k}\right)\right]\left[\prod_{k=n^*}^{n-1}\left(1-\frac{1}{k}\right)\right]\frac{1}{n}=\frac{1}{n}S_{n^*-1}(c)\frac{n^*-1}{n-1}.
\end{equation*}
Therefore the expected length of search is
\begin{equation}
\E[\tau_N^*]=c\sum_{n=1}^{n^*-1}S_{n-1}(c)+S_{n^*-1}(c)\sum_{n=n^*}^N\frac{n^*-1}{n-1}.\label{eq:EtauN}
\end{equation}
Comparing \eqref{eq:EtauN} to \eqref{eq:piN}, we obtain $\E[\tau_N^*]=N\pi_N^*$, and the limit \eqref{eq:tau} holds. \hfill \qedsymbol

\printbibliography

\end{document}